\newcommand{\E}{\mathbb{E}}
\newcommand{\poly}{\mathrm{poly}}
\newcommand{\pivot}{\textrm{PIVOT}\xspace}
\newcommand{\seqtruncatedpivot}{Sequential \textrm{Truncated-Pivot}\xspace}
\newcommand{\truncatedpivot}{\textrm{Truncated-Pivot}\xspace}
\newcommand{\eps}{\ensuremath{\varepsilon}}
\newcommand{\lp}{\textrm{LP}\xspace}
\newcommand{\ourcluster}{\ensuremath{C^{\textrm{pivot}}}\xspace}
\newcommand{\opt}{\textrm{OPT}\xspace}
\newcommand{\gstore}{\ensuremath{G_\mathrm{store}}\xspace}
\newcommand{\badedges}{\ensuremath{E^{\textrm{bad}}}\xspace}
\newcommand{\goodedges}{\ensuremath{E^{\textrm{good}}}\xspace}
\newcommand{\singletonedges}{\ensuremath{E^{\textrm{sin}}}\xspace}
\newcommand{\outdeg}{\ensuremath{\deg_{\textrm{out}}}\xspace}
\newcommand{\vsin}{V^\textrm{sin}\xspace}
\declaretheorem[style=italicized,numbered=no,name={Theorem\kern-.09em}]{theorem*}
\definecolor{green}{RGB}{152,252,152}
\let\oldnl\nl
\newcommand{\nonl}{\renewcommand{\nl}{\let\nl\oldnl}}
\newcommand\IfRestateTF{%
  \ifx\label\thmt@gobble@label 
    \expandafter\@firstoftwo
  \else
    \expandafter\@secondoftwo
  \fi
}
\newcommand{\RestateRemark}{\IfRestateTF{{\normalfont\bfseries (Restated) }}{}}
\title{A \texorpdfstring{$(3+\varepsilon)$}{(3+epsilon)}-Approximate Correlation Clustering Algorithm in Dynamic Streams}
\begin{document}

\maketitle

\begin{abstract}
  Grouping together similar elements in datasets is a common task in data mining and machine learning. In this paper, we study streaming algorithms for correlation clustering, where each pair of elements is labeled either similar or dissimilar. The task is to partition the elements and the objective is to minimize disagreements, that is, the number of dissimilar elements grouped together and similar elements that get separated.
  
  Our main contribution is a semi-streaming algorithm that achieves a $(3 + \varepsilon)$-approximation to the minimum number of disagreements using a single pass over the stream. In addition, the algorithm also works for dynamic streams.
  Our approach builds on the analysis of the PIVOT algorithm by Ailon, Charikar, and Newman [JACM'08] that obtains a $3$-approximation in the centralized setting.
  Our design allows us to sparsify the input graph by ignoring a large portion of the nodes and edges without a large extra cost as compared to the analysis of PIVOT.
  This sparsification makes our technique applicable in models such as semi-streaming, where sparse graphs can typically be handled much more efficiently.
  
  Our work improves on the approximation ratio of the recent single-pass $5$-approximation algorithm and on the number of passes of the recent $O(1/\varepsilon)$-pass $(3 + \varepsilon)$-approximation algorithm [Behnezhad, Charikar, Ma, Tan FOCS'22, SODA'23]. Our algorithm is also more robust and can be applied in dynamic streams.
  Furthermore, it is the first single pass $(3 + \varepsilon)$-approximation algorithm that uses polynomial post-processing time.
\end{abstract}



\section{Introduction}
In this paper, we consider the \emph{correlation clustering} problem introduced by~\cite{bansal2004correlation}, where the goal is to group together similar elements and separate dissimilar elements.
We model the similarity as a complete signed graph $G = (V, E^+ \cup E^-)$, where a \emph{positive} edge $\{ u, v \} = e \in E^+$ indicates that $u$ and $v$ are similar.
In case $e \in E^-$, the edge is \emph{negative} and the nodes are dissimilar. 
The goal is to minimize the \emph{disagreements}, where a disagreement is induced by grouping together dissimilar nodes or separating similar ones.
As pointed out by~\cite{chierichetti2014correlation}, it is typically the case that the set of negative edges is much larger than the set of positive edges.
Hence, in this paper, we identify the input graph with the set of positive edges, i.e., $G = (V, E^+)$ and the negative edges are defined implicitly\footnote{We  can instead identify the input graph with $G = (V, E^+)$ i.e. the set of negative edges. This does not make a difference for dynamic semi-streaming as we can modify the stream to first add edges between all vertex pairs, and then the stream of negative edges can be interpreted as edge deletions.}.
Correlation clustering is a natural abstraction for central problems in data mining and machine learning such as community and duplicate detection~\cite{Arasu2009, Chen2012}, link prediction~\cite{Yaroslavtsev2018}, and image segmentation~\cite{Kim2011}.
A key feature of correlation clustering, as opposed to, for example, the standard $k$-means clustering, is that the number of clusters is not predetermined.

As the volume of data sets is growing fast, there is an increasing demand for \emph{sublinear} solutions to  clustering problems.
Our main contribution is a novel sparsification technique, where we turn an input graph of $n$ nodes and $m$ edges into a sparse representation of $\widetilde{O}(n)$ bits\footnote{The $\widetilde{O}(f(n))$-notation hides polylogarithmic in $n$ terms.}.
We show how to find a $(3 + \eps)$-approximate clustering of the \emph{original} graph by only processing the sparsified graph.
This approach is appealing for many models of computation tailored for processing massive data sets such as semi-streaming, where the working space is much smaller than the size of the input graph.
We measure the space as \emph{words} of $O(\log n)$ bits, which is just enough to store an identifier of an edge.
We now state our main result and then, introduce the semi-streaming model and related work.

\begin{theorem*}[Main Theorem, informal version]
    There is a single-pass semi-streaming algorithm that obtains a $(3 + \eps)$-approximation to correlation clustering. The algorithm works even for dynamic streams. The approximation guarantee holds in expectation and with high probability\footnote{An event holds with high probability, w.h.p., if it holds with probability at least $1 - n^{-c}$ for a desirably large constant $c \geq 1$.}.
\end{theorem*}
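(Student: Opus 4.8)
The plan is to execute a memory-bounded simulation of the \pivot algorithm of Ailon, Charikar, and Newman --- whose expected cost is at most $3\,\opt$ --- and to pay only an extra $\eps\,\opt$ for the part of the graph we discard. First I would draw, once and for all, an independent uniform rank $\pi(v)\in[0,1]$ for every vertex $v$; this is the only global randomness and fits in $\widetilde O(n)$ words. During the single pass, for each vertex $v$ I would keep only a bounded \emph{local view}: the $b=\Theta(\eps^{-1}\log n)$ positive neighbours of $v$ of smallest rank, discarding every other incident positive edge as it streams by. Equivalently, a positive edge $\{u,v\}$ is retained iff $u$ is among the $b$ lowest-rank positive neighbours of $v$ or $v$ is among the $b$ lowest-rank positive neighbours of $u$. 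This is maintainable online with $\widetilde O(n/\eps)$ total memory for \emph{any} arrival order, which is precisely why the guarantee survives in the adaptive-order model; call the retained graph \gstore and the overall algorithm \ourpivot.

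After the pass I would run \pivot on the ranks, reading only \gstore. The structural fact this rests on is that in \pivot the cluster of $v$ coincides with the cluster of $u^*$, where $u^*$ is the minimum-rank vertex of $N^+[v]$ that ends up being a pivot, and that $u^*$ is visible in $v$'s local view unless at least $b$ positive neighbours of $v$ have smaller rank than $u^*$. Processing the vertices in increasing rank order, an unclustered vertex that can certify from the stored views (recursively) that none of its lower-rank visible positive neighbours is a pivot becomes a pivot itself; a vertex whose local view is too small to certify this is simply declared a singleton. Let \vsetaside be the set of these truncated (singleton) vertices and \ourcluster the resulting clustering. Everything here is deterministic given $\pi$ and $G$, so correctness of the simulation is the assertion that the stored views determine the \pivot cluster of every non-truncated vertex, which I would prove by a short induction on rank order (the only case to rule out is the one in the displayed structural fact).

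For the approximation ratio I would couple \ourcluster with the clustering $C$ produced by full, untruncated \pivot on the same ranks. On the event $\vsetaside=\emptyset$ the two coincide, so $\mathrm{cost}(\ourcluster)\le\mathrm{cost}(C)+(\text{cost of the edges whose status changes because of truncation})$; since each $v\in\vsetaside$ merely becomes a singleton, the surplus is at most $\sum_{v\in\vsetaside}\deg^+(v)$, and in fact at most $\sum_{v\in\vsetaside}|\{x\in N^+(v): x\text{ lies in the }C\text{-cluster of }v\}|$, because the positive neighbours of $v$ of rank below $u^*$ are already separated from $v$ in $C$ and so contribute no surplus. By Ailon--Charikar--Newman, $\E[\mathrm{cost}(C)]\le 3\,\opt$, so it remains to show $\E\big[\sum_{v\in\vsetaside}\deg^+(v)\big]\le\eps\,\opt$. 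The crux is that $v\in\vsetaside$ forces at least $b$ positive neighbours of $v$ to be non-pivots, each having a (same-cluster-forbidding) pivot of smaller rank in its own neighbourhood, and hence forces $v$ into $\Omega(b)$ bad triangles of exactly the shape the ACN analysis already pays for; charging the truncation surplus against that structure costs only a multiplicative $(1+O(\eps))$ once $b=\Theta(\eps^{-1}\log n)$ is large enough.

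The step I expect to be the main obstacle is exactly this last one --- turning ``$v$ is truncated'' into a probability bound strong enough to beat the factor $\deg^+(v)$ after summation over $v$. A naive union bound over neighbours will not do, since a graph may contain many bad triangles yet have small $\opt$; the argument has to be routed through the bad-triangle / LP lower bound used by ACN, showing that the rare truncation events coincide with structure already charged and that the extra $b$ neighbours supply the $1/\eps$ slack, while also checking that the recursive definition of \vsetaside does not cascade in a way that breaks the coupling with $C$. Finally, for the high-probability statement I would run $\Theta(\log n)$ independent copies during the single pass (still $\widetilde O(n/\eps)$ memory); by Markov each copy attains cost $\le(3+\eps)\,\opt$ with constant probability, so some copy does so with high probability, and a second pass computes the exact cost of each of the $O(\log n)$ stored clusterings and outputs the cheapest.
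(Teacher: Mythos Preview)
Your high-level plan --- sparsify using the ranks, simulate \pivot on the sparse graph, and charge the surplus against structure the ACN analysis already pays for --- is in the right spirit, but the two steps you flag as obstacles are genuine gaps, and the paper takes a different route precisely to avoid them.

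First, the cascade issue breaks your coupling bound. When $u$ is truncated to a singleton it ceases to be a pivot in \ourpivot, yet it may be a pivot in full \pivot; a higher-rank neighbour $w$ of $u$ that full \pivot would have clustered under $u$ may then become a pivot in \ourpivot, changing the status of $w$'s neighbours in turn. The symmetric difference between the two clusterings is therefore not confined to edges touching $V^*$, so the inequality $\mathrm{cost}(\ourcluster)\le\mathrm{cost}(C)+\sum_{v\in V^*}\deg^+(v)$ is unjustified. (If instead you let ``pivot'' mean ``certified pivot in full \pivot'', truncation propagates upward through the rank order and you lose control of $|V^*|$.) Second, even granting the coupling, the charging step does not close: that a truncated $v$ sits in $\Omega(b)$ bad triangles $(v,u_i,w_i)$ does not yield $\E\big[\sum_{v\in V^*}\deg^+(v)\big]\le\eps\,\opt$, because these triangles overlap across different $v$ and the ACN dual bound constrains fractional triangle weights, not triangle counts; there is no evident per-vertex inequality of the form $\Pr[v\in V^*]\cdot\deg^+(v)\le O(1/b)\cdot(\text{LP contribution})$ that survives summation.

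The paper avoids both issues by a vertex-based sparsification and a self-referential (rather than coupling) analysis. A vertex $u$ is \emph{interesting} iff $\pi_u<\tau_u:=\tfrac{c}{\eps}\cdot\tfrac{n\log n}{\deg(u)}$, and one stores every edge with at least one interesting endpoint; this is $O(n\log^2 n/\eps)$ words even against an adaptive-order adversary, since the threshold for $u$ depends only on $\pi_u$ and $\deg(u)$. Greedy MIS is then run \emph{only on the interesting vertices}, so the pivot set is determined with no recursion to cascade; every uninteresting vertex either joins the cluster of a visible low-rank pivot or becomes a singleton. For the cost, the paper never compares to full \pivot. It shows directly that the pivot-cluster cost \ourcluster satisfies $\E[\ourcluster]\le 3\,\opt$ via the ACN dual, and then partitions the positive edges on singletons into \emph{good} ones (already cut by some pivot cluster, hence already counted in \ourcluster) and \emph{bad} ones. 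A degree-drop lemma --- at iteration $\lceil\tau_u\rceil$, w.h.p.\ at most $\eps\deg(u)$ of $u$'s lower-or-equal-degree neighbours lie outside pivot clusters --- plus a handshake count gives $|\badedges|\le 2\eps\,|\singletonedges|\le 4\eps\cdot\ourcluster$. The surplus is thus bounded against \ourcluster itself, never directly against \opt, and this is what makes the argument close. Your treatment of the second pass for the w.h.p.\ guarantee (run $\Theta(\log n)$ independent copies in parallel, then one extra pass to score them) matches the paper.
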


\paragraph{State-of-the-Art in Semi-Streaming.}
In graph streaming, the input graph is given to the algorithm as an \emph{edge stream}~\cite{Feigenbaum2005, Feigenbaum2005a, Mut2005}.
In the semi-streaming setting, the algorithm has $\widetilde{O}(n)$ working space to store its state.
The goal is to make as few \emph{passes} over the edge-stream as possible, ideally just one. It is well known that there is a strong separation between one and two passes for problems like deterministic coloring \cite{assadi2022deterministic} and minimum cuts \cite{AssadiD2021}.
In the case of many problems, such as matching approximation or correlation clustering, simply storing the output might demand $\Omega(n)$ words.

For correlation clustering, it has already been observed in \cite{ahn2015correlation,behnezhad2023single} that by allowing exp\-onential-time computation, one can first run a streaming algorithm that computes an $\widetilde{O}(n)$-sized sketch of the graph that approximately stores the values of all cuts \cite{AGM12} and to then brute-force a solution by iterating over all possible clusterings. 
In this way, one obtains a $(1+\eps)$-approximation algorithm that uses $\widetilde{O}(n/\eps^2)$ space and a single pass even for dynamic streams. Note that since correlation clustering is APX-hard~\cite{charikar2005clustering}, unless $\mathsf{P}=\mathsf{NP}$, exponential-time computation is necessary for obtaining a $(1+\eps)$-approximation. The focus has therefore been on designing polynomial-time algorithms that achieve a constant approximation ratio.

Constant approximation ratios have been reached by using  the sparse-dense decomposition~\cite{cohen2021correlation, assadi2022} in single-pass semi-streaming.
On the downside, the approximation ratios, while being constant, are very high.
In the case of~\cite{cohen2021correlation}, they obtain an approximation ratio over $700$, the ratio of~\cite{assadi2022} is over $6400$. 
An $O(1/\eps)$-pass semi-streaming algorithm was given that obtains a $(3 + \eps)$-approximation to correlation clustering in \cite{Behnezhad2022}. A $5$-approximation was given using just a single pass~\cite{behnezhad2023single}.
Chakrabarty and Makarychev \cite{chakrabarty2023singlepass} improve the single-pass $5$-approximation algorithm of \cite{behnezhad2023single} to obtain a $(3 + \eps)$-approximation. For insertion-only streams, the algorithm of \cite{chakrabarty2023singlepass} only requires $O(n)$ words of space, whereas our algorithm in this case requires $O(n\log^2 n)$ words of space. However, a downside of the recent works by~\cite{Behnezhad2022, behnezhad2023single, chakrabarty2023singlepass} is that they do not work in dynamic streams.
Our algorithm and its analysis are more robust in the sense that they can be adapted to dynamic streams by using standard techniques.

\begin{remark}
    Subsequent to our work, the approximation ratio has been improved to $1.876$ by~\cite{DBLP:conf/stoc/Cohen-AddadLPTY24} using sublinear space in the streaming setting. 
\end{remark}

\subsection{Related Works on Correlation Clustering}
\label{sec:previouswork}

In the centralized setting, finding an optimal clustering that minimizes disagreements is known to be NP-hard~\cite{bansal2004correlation}, which motivates the study of approximation algorithms.
We note that there is another variant of the correlation clustering problem where we are interested in \emph{maximizing agreements}. An agreement corresponds to clustering together positive edges and separating negative edges. This variant is also NP-hard since the optimum solutions are the same for the maximization and the minimization problems.
However for approximate solutions, the two variants are very different.
For maximizing agreements, a trivial algorithm consisting in forming one single cluster or only single node clusters yields a $1/2$-approximation.
Furthermore, $0.7664$-approximation and $0.7666$-approximation algorithms are known, even for weighted graphs~\cite{swamy2004correlation, charikar2005clustering}.

In this paper, we focus on the minimizing disagreements problem. The first work to breach the integrality gap of $2$ for the standard LP relaxation of the problem was due to~\cite{cohen2022correlation,Cohen-AddadLLN2023}, it gives an approximation ratio of $(1.73+\eps)$ through rounding a solution to the Sherali-Adams relaxation. 
The current state-of-the-art approximation ratio is $1.437$ due to~\cite{DBLP:conf/stoc/CaoCL0NV24}, which is obtained by rounding the solution to the cluster LP. The cluster LP is exponentially-sized but it can be approximately solved in polynomial time and it has the advantage that we can do rounding without having to deal with correlated rounding errors.

The simple and well-known \pivot algorithm, yields a $3$-approximation~\cite{ailon2008aggregating} and is not based on solving an LP.
The \pivot algorithm works as follows.
\begin{itemize}
    \item In each sequential step, pick a node $u$ uniformly at random.
    \item Create a cluster $C$ that contains $u$ and all of its neighbors in the current graph.
    \item Remove $C$ from the graph and recurse on the remaining graph.
\end{itemize}
An equivalent formulation is through a \emph{randomized greedy Maximal Independent Set (MIS)}, where we pick a random permutation of the nodes and iterate over the nodes according to the permutation. In each step, the current node $v$ is selected to the MIS and its neighbors removed from the graph, unless $v$ was removed in an earlier step.
Through the randomized greedy MIS, one can obtain an $O(\log \log \Delta)$-pass algorithm for a $3$-approximation in semi-streaming~\cite{ahn2015correlation}.

Due to this connection to MIS, implementing the PIVOT algorithm in semi-streaming is also provably hard. There is an $\widetilde{\Omega}(n^2)$ space lower bound for computing an MIS in a single-pass of a stream of edges \cite{cormodeDK2019} and any semi-streaming algorithm using $O(n\cdot \poly\log(n))$ space for finding an MIS with constant probability of success requires $\Omega(\log\log n)$ passes \cite{AssadiKNS2024}. Nevertheless, variants of the PIVOT algorithm have been successfully shown to achieve good approximations. Our algorithm, and the works of \cite{Behnezhad2022,behnezhad2023single,chakrabarty2023singlepass} discussed earlier are all variants of the PIVOT algorithm.

\paragraph{Prior Work and Dynamic Streams}
We now elaborate on the details of \cite{behnezhad2023single,chakrabarty2023singlepass} and explain why it does not extend to dynamic streams.
To compute the $5$-approximation, \cite{behnezhad2023single} first picks a random permutation of the nodes and for each node maintains a pointer to the neighbor with the smallest rank in the permutation throughout the stream. A partial clustering is obtained based on these pointers and then unclustered nodes are put into singleton clusters. The algorithm has a linear space requirement for insertion-only streams.
The authors of \cite{chakrabarty2023singlepass} improve on this work by implementing a similar scheme but keep track of the $k$ smallest rank neighbors for each node. They show that this extension gives a $(3+O(1/k))$-approximation. Their approach requires $O(kn)$ words of space in insertion-only streams. However, finding the smallest rank neighbor for each node seems fundamentally challenging since computing a minimum in dynamic streams is provably hard. Our algorithm on the other hand can be implemented in dynamic streams with only a $\poly\log n$ space overhead, while giving the same $(3+\eps)$-approximation guarantee.

\subsection{A High Level Technical Overview of Our Contributions}

Our main contribution is a graph \emph{sparsification} technique inspired by the approaches that simulate the greedy MIS to approximate correlation clustering.
In the previous aforementioned works based on directly simulating the greedy MIS, the progress guarantee is given by a double exponential drop in the maximum degree or the number of nodes in the graph leading to $O(\log \log \Delta)$ and $O(\log \log n)$ pass algorithms.
Moreover, the handle used to obtain this degree drop is the following:
Consider the random permutation over the nodes.
After processing the first $t$ nodes, we can guarantee that the maximum degree is at most $O(n \log n)/t$ w.h.p. (see for example~\cite{ahn2015correlation}).
Furthermore, after the maximum degree is $d$, a prefix of length roughly $n/\sqrt{d}$ of the random permutation contains $O(n)$ edges. 
Then we can iterate over the permutation and get the guarantee that the maximum degree of the remaining graph is $O(\sqrt{d}\log n)$.

For a single pass semi-streaming algorithm, this approach seems fundamentally insufficient, as it relies on the progress related to the maximum degree of the graph. In our approach, we give a modified process that effectively gives a $3$-approximation in expectation (as in the greedy process), but only for \emph{almost} all nodes.
For this exposition, suppose that we have a $d$ regular graph for a sufficiently large $d$.
After we process the ``prefix'' containing the first $\Theta(n/d)$ nodes in the permutation, we expect that the degree of each node $u$ has dropped by a significant factor or a neighbor of $u$ has joined the MIS.

The key idea is that if a node $u$ is not part of this prefix, it is unlikely to join the MIS after this prefix is processed.
We leverage this idea as follows.
Prior to the simulation of the randomized greedy MIS, we ``set aside'' all nodes whose rank in the permutation is considerably larger than $n/d$.
The graph on the nodes with rank at most $n/d$ corresponds roughly to a set of nodes sampled with probability $1 / d$, which we show to contain $\widetilde{O}(n)$ edges. 

We process the prefix graph (i.e. the graph induced by sampled nodes) by running a greedy MIS algorithm on it.
This corresponds to running the \pivot algorithm on the prefix graph that does not contain any nodes that are set aside.
From prior work~\cite{ailon2008aggregating}, we almost immediately get that we do not lose more than a factor of $3$ from the optimum on the nodes clustered by the MIS on the sampled graph (\Cref{lemma: pivotcost}).

For the nodes that are set aside, we need more work. By carefully choosing the prefix length, we show that the degree of each node in the input graph drops by at least a factor of $(1 - \eps)$, with high probability, due to the greedy MIS (Lemma~\ref{lemma: low-deg-bound}).
We can charge each edge $e$ between a \pivot node and a node set aside to the greedy MIS analysis.
We then give a counting argument that shows that only an $\eps$ factor of the edges are between the nodes set aside (Lemma~\ref{lemma: goodedges}).
Hence, we can charge those to the \pivot analysis and pay only an additive $\eps$ factor in the approximation.

By setting the sampling probability appropriately, this line of attack works also for the non-regular case.
This idea is the basic building block for our results.
We note that in this sampling step, we add a $\log n$ and an $\eps$ term into the sampling probability in order to obtain a degree drop large enough for our approximation analysis and to make sure all guarantees hold with high probability.

\subsection{The Semi-Streaming Model.}\label{sec: model}
In the semi-streaming model, the input graph is not stored centrally, but an algorithm has access to the edges one by one in an input stream. A \emph{single-pass} semi-streaming algorithm has $\widetilde{O}(n)$ working space that it can use to store its state and is allowed to go through the stream only once. 

In the dynamic setting, the input stream consists of arbitrary edge insertions and deletions. Formally, the input
stream is a sequence $S = \langle s_1, s_2, . . .\rangle$ where $s_i = (e_i, \delta_i)$
where $e_i$ encodes an arbitrary undirected edge and $\delta_i \in \{-1, 1\}$. The multiplicity of an edge $e$ is defined as $f_e = \sum_{i:e_i=e} \delta_i$. Since the input graph is simple, we assume that $f_e \in \{0, 1\}$ throughout the stream for all $e$. 
At the end of the stream, we have $f_e = 1$ if $e$ belongs to the input graph and $0$ otherwise. 
In the insertion-only setting, the input stream consists only of edge insertions, i.e. $\delta_i = 1$ for all $i$. 

For the sake of clarity, we describe here \emph{how} the stream of edges is chosen. We assume that the graph is fixed before the algorithm executes, but the edges updates arrive in an adversarial order. The edge updates are revealed by the adversary, depending on the choices made by the algorithm so far. Although it is not explicitly stated, \cite{behnezhad2023single,chakrabarty2023singlepass} assume this setting for insertion-only streams. Our algorithm also works in this setting even for dynamic streams.

\subsection*{Organization of the Paper}
 The paper is organized as follows.
In section~\ref{sec: truncated-pivot}, we introduce the \truncatedpivot algorithm (\Cref{alg:delta-pivot}) for correlation clustering and show how it can be implemented in a single-pass in the dynamic and insertion-only semi-streaming models.
In section~\ref{sec: approximation-analysis}, we show that the \truncatedpivot algorithm returns a $(3+\eps)$-approximation of an optimum clustering.

\section{The \truncatedpivot Correlation Clustering Algorithm}\label{sec: truncated-pivot}

In this section, we give the \truncatedpivot algorithm for correlation clustering, which forms the basis for the semi-streaming implementation. 
The high-level idea of our algorithm is to compute a randomized greedy MIS with a small twist. 
Informally, we exclude nodes whose degree is likely to drop significantly before they are processed in the greedy MIS algorithm, where the MIS nodes will correspond to the \pivot nodes, or simply \emph{pivots}. 
This then allows us to effectively \emph{ignore} a large fraction of the nodes that will never be chosen as pivots.


\begin{algorithm}[tbh]
\caption{\truncatedpivot}\label{alg:delta-pivot}

    \KwIn{Graph $G = (V, E^+)$, each node $v \in V$ knows its degree $\deg(v)$ in~$G$}
     Fix a random permutation $\pi$ over the nodes.\;
     Initially, all nodes are unclustered and \emph{interesting}.\;
     A node $u$ marks itself \emph{uninteresting} if $\pi_{u} \geq \tau_{u}$ where $\tau_{u} = \frac{c}{\eps}\cdot \frac{n\log n}{\deg(u)}$ \label{line: interesting}\;
     Let \gstore be the graph induced by the interesting nodes.\label{line: gstore}\;
     Let $\mathcal{I}$ be the output of running greedy MIS on $\gstore$ with ordering $\pi$. \label{line:greedy-mis-prefix}\;
     Nodes in $\mathcal{I}$ become cluster centers (pivots). \label{line:pivot}\;
     Each node $u \in V \setminus \mathcal{I}$ joins the cluster of the smallest rank pivot neighbor $v$, if $\pi_v < \tau_u$.\label{line:cluster-pivot-nbrs}\;
     Each unclustered node forms a singleton cluster.\label{line: singleton}\;

\end{algorithm} 
In Section \ref{sec: approximation-analysis} we will prove the following theorem that gives a guarantee on the cost of the clustering returned by \Cref{alg:delta-pivot}.

\begin{theorem}[Main Theorem, formal version]\label{cor: parallelApproximation}
    For any $\eps \in (0,1/4)$, the \truncatedpivot algorithm (\Cref{alg:delta-pivot}) is a $(3 + \eps)$-approximation algorithm to the Correlation Clustering problem. 
    The approximation guarantee is in expectation. 
\end{theorem}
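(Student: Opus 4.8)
The plan is to bound the expected cost of the clustering returned by \Cref{alg:delta-pivot} against the cost of the clustering that the classical \pivot algorithm would produce when run directly on \gstore with the ordering $\pi$, and then to invoke the Ailon--Charikar--Newman analysis (which is already stated as \Cref{lemma: pivotcost} in the text). Concretely, let $\mathcal{I}$ be the MIS/pivot set computed on \gstore. I would split the disagreements of our output into three groups: (i) disagreements among pairs where at least one endpoint joins some pivot's cluster via \cref{line:cluster-pivot-nbrs}, (ii) disagreements involving a node that ends up as a singleton in \cref{line: singleton} even though it is non-interesting (i.e., it was ``set aside''), and (iii) disagreements among interesting nodes that are handled exactly as in \pivot on \gstore. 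Group (iii) is charged directly to \pivot's cost on \gstore via \Cref{lemma: pivotcost}, which gives a factor $3$ against \opt. The whole point is that the extra cost incurred by (i) and (ii) — the slack introduced by ignoring non-interesting nodes and by the constraint $\pi_v < \tau_u$ — is only an additive $\eps\cdot\opt$ in expectation.

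The key steps, in order. First I would recall the ``bad triangle'' framework from \Cref{sec: pivot clusters}: the cost of \pivot is accounted for by charging each disagreement to a bad triangle (a triangle with exactly two positive edges) in which the pivot was the node that ``triggered'' the clustering, and a fractional-matching/LP-duality argument bounds the expected number of such charges by $3\,\opt$. Our algorithm produces the same pivots on \gstore, so every disagreement between two interesting nodes, or between an interesting non-pivot node $u$ and its chosen pivot neighbor, is charged exactly as in \pivot — \textbf{except} when the $\pi_v < \tau_u$ guard prevents $u$ from joining the pivot it ``should'' have joined. Second, I would use \Cref{lemma: low-deg-bound}: with high probability, for every node the greedy MIS on \gstore causes its degree to drop by at least a $(1-\eps)$ factor, i.e., all but an $\eps$-fraction of a node's positive neighbors are already clustered by the time it is processed. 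This controls how many positive edges incident to a set-aside node $u$ can be ``wrongly'' cut, i.e., it bounds the cost of group (ii): the positive edges from a set-aside node to already-clustered nodes are few, and we can charge them to the bad triangles that the MIS process already created. Third, I would invoke \Cref{lemma: goodedges}, the counting argument that at most an $\eps$-fraction of all edges have \emph{both} endpoints non-interesting; those edges are the only ones not already covered by the previous charging and their total count is bounded by $\eps\cdot\opt$ up to the trivial $\opt \ge $ (something) type bound, or more carefully charged to the \pivot analysis on \gstore as the text says (``we can charge those to the \pivot analysis and pay only an additive $\eps$ factor'').

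Putting it together: $\E[\mathrm{cost}] \le 3\,\opt + O(\eps)\,\opt$, and after rescaling $\eps$ by a constant (absorbing the constant $c$ in $\tau_u$ and the constants from the Chernoff/Azuma bounds in \Cref{lem:ChernoffBound} and \Cref{lem:AzumaBound}), this is $(3+\eps)\,\opt$ for $\eps \in (0,1/4)$. I would also need to handle the low-probability failure event of \Cref{lemma: low-deg-bound}: on that event the cost can be as large as $\binom{n}{2}$, but since the event has probability $\le n^{-c}$ and $\opt$ can be assumed $\ge 1$ (otherwise the instance is trivially clusterable at cost $0$, which \pivot also achieves), this contributes a negligible $o(1)$ additive term to the expectation.

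The main obstacle I anticipate is step two: making the charging of group (i)/(ii) disagreements to bad triangles \emph{disjoint} from the charging used in the \pivot analysis on \gstore, so that no bad triangle is double-counted and the $3+\eps$ (rather than, say, $6$) is preserved. The subtlety is that a positive edge $\{u,v\}$ with $u$ set aside and $v$ clustered by pivot $p$ before $u$ is processed is cut by our algorithm; to charge it we want a bad triangle $\{u,v,p\}$ (positive edges $uv$ and $vp$, negative edge $up$) or similar, and we must argue that the \emph{pivot} $p$ — chosen uniformly among $\{u,v,p\}$ conditioned on the relevant event — is the triggering node with the right probability, which is exactly where the adaptive-order (rather than adversarially-robust) assumption from the \textbf{Remark} is used. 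Getting these conditional-probability bookkeeping details right, and ensuring the $\eps$ in $\tau_u$ is large enough to make the degree drop strong enough that the number of such ``leftover'' edges is genuinely an $\eps$-fraction of the triangles already charged, is the technical heart of the argument.
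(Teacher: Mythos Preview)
Your overall plan is on the right track and you correctly identify the relevant lemmas, but your ``main obstacle'' paragraph reveals a conceptual gap that would prevent the argument from closing as you describe it. You do mention the correct route in passing (``or more carefully charged to the \pivot analysis''), but then set it aside in favour of a triangle-by-triangle charging that is both unnecessary and would not work as stated.

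The paper's argument is simpler than what you sketch. It first reduces to the sequential \ourpivot (\Cref{alg:delta-pivot-seq}) via \Cref{lemma: seqvsparallel}, and then splits the total cost into just \emph{two} pieces: the pivot-cluster cost \ourcluster of \Cref{def: pivotcost} and $|\badedges|$. The point you are missing is that \ourcluster, \emph{by definition}, already counts every positive edge cut by a pivot cluster to a node that was still active at the time --- in particular it counts every good edge, so $|\goodedges|\le\ourcluster$ holds deterministically. \Cref{lemma: goodedges} says $|\badedges|\le 2\eps\,|\singletonedges|$ (not ``an $\eps$-fraction of all edges have both endpoints non-interesting'' as you paraphrase it), from which $|\badedges|\le\frac{2\eps}{1-2\eps}|\goodedges|\le 4\eps\cdot\ourcluster$ on the high-probability event of \Cref{lemma: singleton-cluster-deg-bound}. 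Hence the total cost satisfies $\E[\ourcluster+|\badedges|]\le(1+4\eps)\,\E[\ourcluster]\le 3(1+4\eps)\,\opt$, modulo the $n^{-\Omega(1)}$ correction from the failure event, which you handle correctly.

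So there is no need to find, for each bad edge $\{u,v\}$, a fresh bad triangle $\{u,v,p\}$ with the right conditional probability and disjoint from the triangles implicitly used in \Cref{lemma: pivotcost}. That approach would indeed be delicate --- such a $p$ need not exist, and the uniform-pivot-probability argument does not apply once $u$ has been set aside --- and the disjointness worry is a red herring. The entire extra cost from singletons is absorbed \emph{multiplicatively} into \ourcluster, which is already bounded by $3\,\opt$; this is exactly what yields $3+O(\eps)$ rather than the $6$ you feared.
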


\subsection{Implementation in Dynamic Streams}

{
Here we describe and analyze \Cref{alg:dynamic-delta-pivot}, which implements \truncatedpivot in the dynamic semi-streaming model. We begin with the observation that in order to simulate \Cref{alg:delta-pivot}, we only need to store the edges incident to interesting nodes. This is because we run a greedy MIS on the graph induced by the interesting nodes, and in Line~\ref{line:cluster-pivot-nbrs}, we only cluster vertices that are neighbors of pivot (i.e. interesting) nodes.

According to Line~\ref{line: interesting} of \Cref{alg:delta-pivot}, a node $u$ marks itself uninteresting if $\pi_{u} \geq \tau_{u}$ where $\tau_{u} = c n\log n / \eps \deg(u)$. This is equivalent to saying that $u$ marks itself uninteresting if $\deg(u) \geq \sigma_{u}$ where $\sigma_{u} = c n\log n / \eps \pi_u$. Therefore, if the stream was insertion-only, we could only store the edges of $u$ as long as $\deg(u) < \sigma_{u}$.

The main challenge with dynamic streams is that we need to keep track of the incident edges of a node even if $\deg(u) \geq \sigma_{u}$, because its degree could go down later in the stream, and it could become interesting again. To overcome this, we will maintain a \emph{$k$-sparse recovery} data structure for the incident edges of each node, that allow us to recover the ($<\sigma_u$) incident edges of each interesting node $u$ at the end of the stream deterministically. This strategy is described more formally in \Cref{alg:dynamic-delta-pivot}. 

The following lemma describes a deterministic $k$-sparse recovery data structure, which follows from Lemma 9 in~\cite{DBLP:journals/tcs/BarkayPS15} (by substituting $n = k$, $u = n$, and $r = 1$ for our use case).

\begin{lemma}[Lemma 9, \cite{DBLP:journals/tcs/BarkayPS15}]\label{lem: k-sparse-recovery}
    There exists a deterministic data structure, $k$-sparse recovery with parameter $k$, that that maintains a sketch of stream $I$ (involving insertions and deletions of elements from $[n]$) and can recover all of $I$’s elements if $I$ contains at most $k$ distinct elements. It uses $O(k \log n)$ bits of space and can be updated in $O(\log^2 k)$ amortized operations.
\end{lemma}

Note that \Cref{lem: k-sparse-recovery} does not give any guarantees if the stream contains more than $k$ distinct elements. In this case, the output might be something completely meaningless. But in our use case, this only happens for uninteresting nodes, and we don't want to recover their incident edges anyway. Therefore, we are able to deterministically recover all the edges incident on interesting nodes at the end of the stream.


\begin{algorithm}[tbh]
   \setcounter{AlgoLine}{0}
\caption{Dynamic Semi-Streaming \truncatedpivot}\label{alg:dynamic-delta-pivot}
    \KwIn{Graph $G = (V, E)$ as a dynamic stream of edge insertions and deletions}
    Fix a random permutation $\pi$ over the nodes. \;
    Initially, all nodes $u$ are unclustered and \emph{interesting}, $\deg(u) = 0$, and $\sigma_u = \frac{c}{\eps} \cdot \frac{n\log n}{\pi_u}$. \;
    For each node $u$, we initialize a $\sigma_u$-sparse recovery data structure for the adjacency vector of $u$ (the row of the adjacency matrix of $G$ that corresponds to $u$). \;
    Upon receiving the $i^{th}$ element of the stream, $s_i = (e_i, \delta_i)$ where $e_i=\{u,v\}$, we update $\deg(u)$, $\deg(v)$, and the sparse recovery data structures associated with $u$ and $v$. \;
  \nonl  \textbf{At the end of the stream:}\;
    A node $u$ marks itself \emph{uninteresting} if $\deg(u) \geq \sigma_{u}$. \;
    We retrieve all incident edges of interesting nodes using the ${\sigma_u}$-sparse recovery structures for all $u$. \;
    Simulate Lines \ref{line: gstore} to \ref{line: singleton} of \Cref{alg:delta-pivot}.\;
\end{algorithm} 
We now prove a bound on the space requirement of \Cref{alg:dynamic-delta-pivot}. Note that for insertion-only streams we can get the same space guarantee by simply storing the ($<\sigma_u$) incident edges of all interesting nodes $u$.

\begin{lemma}\label{lem: space-req-dynamic-alg}
    \Cref{alg:dynamic-delta-pivot} requires $O(n\log^2 (n)/\eps)$ words of space. 
\end{lemma}
\begin{proof}
    For node $u$, ${\sigma_u}$-sparse recovery requires $O(\sigma_u\cdot\log n)$ bits of space, where $\sigma_u = c n\log n / \eps \pi_u$. 
    Since, each node requires one single ${\sigma_u}$-sparse recovery structure, the total amount of memory required to store and maintain all ${\sigma_u}$-sparse recovery structures throughout the algorithm is: 
    \begin{align*}
        \sum_{u\in V} \sigma_u\cdot \log n = \sum_{i=1}^n \frac{c n\log^2 n}{\eps\cdot i} = \frac{cn}{\eps}\log^2n \cdot \sum_{i=1}^n\frac{1}{i} = O(n\log^3 n/\eps)
    \end{align*}
    because the $n^{th}$ harmonic number is $H_n = O(\log n)$. 
    For each node, storing the current degree and the node identifier only requires $O(\log n)$ bits of memory, which makes the memory necessary for the algorithm $O(n\log^3 n/\eps)$ bits. 
    Since each word contains $O(\log n)$ bits, the lemma follows.
\end{proof}

\begin{theorem}
    \Cref{alg:dynamic-delta-pivot} computes a $(3+\eps)$-approximation in expectation of an optimum clustering in a single pass of the dynamic semi-streaming model, and it requires $O(n\log^2(n)/\eps)$ words of space.
\end{theorem}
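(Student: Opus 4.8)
The plan is to assemble the four results already established for \Cref{alg:one-pass_adaptive} and \Cref{alg:delta-pivot}. \textbf{Approximation.} By \Cref{lemma: streaming alg picks permutation}, the permutation $\pi$ produced inside \Cref{alg:one-pass_adaptive} is distributed uniformly at random outside an event of probability $n^{-\Omega(1)}$ (the only failure mode being a rank collision). I would couple \Cref{alg:one-pass_adaptive} and \Cref{alg:delta-pivot} so that they run on this same $\pi$; by \Cref{lemma: equivalence of streaming and parallel alg} they then output the same clustering, again outside an event of probability $n^{-\Omega(1)}$. On the intersection of these good events the cost of \Cref{alg:one-pass_adaptive} equals the cost of \Cref{alg:delta-pivot}, whose expectation is at most $(3+\eps)\,\opt$ by \Cref{cor: parallelApproximation} (applied, say, with parameter $\eps/2$). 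Hence, up to the correction discussed below, $\E[\mathrm{cost}(\text{\Cref{alg:one-pass_adaptive}})] \le (3+\eps)\,\opt$.

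\textbf{Bridging the w.h.p.\ vs.\ expectation gap.} The one genuinely delicate point is that \Cref{cor: parallelApproximation} bounds an expectation while the coupling only succeeds with high probability, so I would control the rare discrepancy event $\mathcal{B}$ (non-uniform $\pi$, or the two algorithms disagreeing). Every clustering has cost at most $\binom{n}{2}\le n^2$, and by choosing the hidden constant in \Cref{lemma: streaming alg picks permutation} and \Cref{lemma: equivalence of streaming and parallel alg} large enough we have $\Pr[\mathcal{B}]\le n^{-c}$ for an arbitrarily large constant $c$. Therefore $\E[\mathrm{cost}(\text{\Cref{alg:one-pass_adaptive}})] \le \E[\mathrm{cost}(\text{\Cref{alg:delta-pivot}})] + n^{2-c} \le (3+\eps/2)\,\opt + n^{2-c}$. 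When $\opt\ge 1$ the additive term is absorbed into the approximation factor for $n$ large enough, giving $(3+\eps)\,\opt$. When $\opt=0$, i.e.\ $G$ is a disjoint union of cliques, \Cref{cor: parallelApproximation} already forces $\mathrm{cost}(\text{\Cref{alg:delta-pivot}})=0$ almost surely, and one checks directly that w.h.p.\ every clique contains an interesting node, all of whose clique-vertices then join that clique's pivot (here $\tau_v=\tau_u$ since all degrees in the clique coincide), so \Cref{alg:one-pass_adaptive} also outputs cost $0$ w.h.p.; the few cliques that fail contribute only $o(1)$ in expectation, which for $\opt=0$ we instead absorb by noting that a failing instance forces the graph to have strictly positive $\opt$ after all, so this subcase is vacuous. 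I expect this case split to be the main (minor) obstacle.

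\textbf{Single pass, memory, and stream model.} The remaining claims are routine given the earlier development. The streaming implementation of lines \ref{line:alg-streaming-for-start}--\ref{line: streaming belonging to a cluster} was argued when \Cref{alg:one-pass_adaptive} was introduced: interesting nodes keep all incident edges, so after the single pass one can build $\gstore$, run the greedy MIS to obtain $\mathcal{I}$, and assign clusters to all nodes (including uninteresting ones, using the stored pivot--uninteresting edges and the maintained degree counters, which determine every $\tau_v$); the degree counters and the length-$\Theta(\log n)$ bit strings $r_u$ cost only $O(n\log n)$ bits overall. The memory bound is exactly \Cref{lemma: tight bound on edges in stream}: at every prefix of the stream the algorithm stores $O(n\log(n)/\eps)$ edges w.h.p., and that lemma explicitly holds against an adaptive adversary, which simultaneously establishes the adaptive-order guarantee. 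Converting edges to words ($O(1)$ words per edge) and adding the $O(n)$ words of auxiliary state yields $O(n\log(n)/\eps)$ words w.h.p., completing the proof.
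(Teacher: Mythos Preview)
Your proof assembles exactly the same ingredients as the paper's (\Cref{lemma: equivalence of streaming and parallel alg} for the coupling, \Cref{cor: parallelApproximation} for the approximation, \Cref{lemma: tight bound on edges in stream} for memory), and in fact you are more explicit than the paper about the w.h.p./expectation gap, which the paper handles only via a footnote folded into the proof of \Cref{thm: main}. One small wobble: in your $\opt=0$ paragraph, the sentence ``a failing instance forces the graph to have strictly positive $\opt$ after all, so this subcase is vacuous'' is not correct---$\opt$ is a property of the fixed input graph and cannot change based on the algorithm's random choices. Your preceding argument (every clique w.h.p.\ contains an interesting node, hence cost $0$ w.h.p., hence expected cost $o(1)$) is already the right one; the paper itself leaves this edge case at ``expected cost $1/\poly(n)$ when $\opt=0$'' in a remark, so you should simply stop there rather than try to make the ratio finite.
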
 
\begin{proof}
    By \Cref{lem: k-sparse-recovery}, in \Cref{alg:dynamic-delta-pivot}, all edges with an interesting endpoint can be recovered. Hence, \Cref{alg:dynamic-delta-pivot} works with the same set of edges as \Cref{alg:delta-pivot} when computing the clustering, thus implying that both algorithms return the same clustering. 
    
    Since \Cref{cor: parallelApproximation} implies that \Cref{alg:delta-pivot} outputs a clustering with expected cost that is a $(3+\eps)$-approximation, then \Cref{alg:dynamic-delta-pivot} does as well. 
    Additionally, \Cref{lem: space-req-dynamic-alg} states that \Cref{alg:dynamic-delta-pivot} requires $O(n\log^2(n)/\eps)$ words of space throughout the stream. 
\end{proof}

\begin{remark}
    We can run \Cref{alg:delta-pivot} independently $O(\log n)$ times and return the best solution to get a w.h.p.\ approximation guarantee using standard probability amplification arguments. This adds an additional $\log n$ factor to the space requirement. The lowest cost clustering can be found in the same pass by (approximately) evaluating the cost of each clustering on a cut-sparsifier \cite{ahn2009CutSparsifiers} (see Appendix A of \cite{behnezhad2023single} for more details).
\end{remark}

}

\section{Approximation Analysis of \truncatedpivot}\label{sec: approximation-analysis}

The goal of this section is to prove the approximation guarantee of the \truncatedpivot algorithm.
For a more comfortable analysis, we prove the approximation guarantee for a sequential version that produces the same output as \Cref{alg:delta-pivot} for each permutation. Following is the main result of this section.

\begin{restatable}{theorem}{maintheorem}
\label{thm: main}\RestateRemark
For any $\eps \in (0,1/4)$, \seqtruncatedpivot (\Cref{alg:delta-pivot-seq}) is a $(3+\eps)$-approximation algorithm to the correlation clustering problem.
The approximation guarantee is in expectation.
\end{restatable}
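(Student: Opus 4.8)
}
The plan is to analyze the sequential algorithm \ourpivot (\Cref{alg:delta-pivot-seq}), which by construction realizes the same clustering as \Cref{alg:delta-pivot} for every fixed permutation, so it suffices to bound $\E[\mathrm{cost}(\ourpivot)]$. I would adopt the charging scheme of Ailon, Charikar and Newman~\cite{ailon2008aggregating} (see \Cref{sec: pivot clusters}): account for every disagreement of \ourpivot against a \emph{bad triangle} of $G$ (a triple with exactly two positive edges), using that any valid fractional packing of bad triangles lower-bounds a fixed multiple of $\opt$. Concretely, I would couple \ourpivot with the hypothetical run of plain \pivot on the whole graph $G$ under the same permutation $\pi$; the latter has expected cost at most $3\opt$ by \cite{ailon2008aggregating}. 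The two executions are closely linked — \ourpivot simply refuses to make an uninteresting node a pivot and refuses to attach an uninteresting node $u$ to a pivot of rank $\geq \tau_u$ — and the task reduces to bounding the expected cost \emph{gap} by $O(\eps)\opt$. This is where \Cref{lemma: pivotcost} (the ACN bound applied to the clustering of $\gstore$), \Cref{lemma: low-deg-bound} (w.h.p.\ every node's positive degree is reduced by a $(1-\eps)$ factor once the greedy MIS on $\gstore$ is run, so an uninteresting node has ``lost'' all but an $\eps$ fraction of its positive edges to pivots of rank below $\tau_u$), and \Cref{lemma: goodedges} (w.h.p.\ only an $\eps$ fraction of the positive edges have both endpoints uninteresting) come in.

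In more detail, I would split the disagreements of \ourpivot into three groups. (i) Disagreements that also occur, and are charged the same way, in \pivot-on-$G$: by \Cref{lemma: pivotcost} their expected total is at most $3\opt$; here one must be slightly careful, since a cluster may absorb uninteresting members, so a negative-edge-inside-a-cluster disagreement may involve uninteresting nodes — but the certifying bad triangle still contains the (interesting) pivot and is hit in the greedy-MIS run, so the ACN/LP bound still applies. (ii) Positive edges $\{u,p\}$ cut because $u$ is uninteresting and its lowest-rank pivot neighbor has rank $\geq \tau_u$ (so $u$ is made a singleton, or attaches to a worse pivot): by \Cref{lemma: low-deg-bound}, w.h.p.\ at most an $\eps$ fraction of $u$'s positive neighbors fail to be clustered by a pivot of rank below $\tau_u$, and each such edge can be placed in a bad triangle that the greedy MIS hit but whose usual disagreement is \emph{absent} in \ourpivot, so there is free ACN charge; this contributes an additive $O(\eps)\opt$. (iii) Positive edges with both endpoints uninteresting: by \Cref{lemma: goodedges} there are only an $\eps$ fraction of these relative to the quantity already charged in (i), so they too add only $O(\eps)\opt$. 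Since (ii)--(iii) rest on high-probability events, I would absorb the $n^{-\Omega(1)}$ failure probability into the expectation via the trivial bound $\mathrm{cost} \leq n^2$.

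Summing the three groups gives $\E[\mathrm{cost}(\ourpivot)] \leq 3\opt + O(\eps)\opt = (3 + O(\eps))\opt$, and rescaling $\eps$ by a constant yields the claimed $(3+\eps)$ bound for $\eps \in (0,1/4)$. I expect the main obstacle to be step (ii): making the charging of the ``extra'' cut edges onto bad triangles fully rigorous without re-using charge already consumed in step (i). The clean way is to keep, per bad triangle, a budget of $3$ and argue that the events ``this triangle certifies a standard disagreement'' and ``this triangle absorbs an extra set-aside edge'' cannot both be active — essentially re-deriving a variant of the ACN dual that is robust to replacing some pivots' clusters by singletons — and then to check that \Cref{lemma: low-deg-bound} and \Cref{lemma: goodedges} supply exactly the slack this budget needs, with the error terms composing additively (not multiplicatively) in $\eps$. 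Everything else is bookkeeping.
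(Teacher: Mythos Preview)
Your plan invokes the right lemmas but assembles them through an unnecessary detour that creates the very gap you worry about in step~(ii). The paper does \emph{not} couple \ourpivot with a hypothetical run of \pivot on all of $G$, and it never searches for ``free ACN charge'' on bad triangles for the extra cut edges. Instead it uses a much simpler accounting: by \Cref{def: pivotcost}, the quantity $\ourcluster$ already includes every positive edge that a pivot cluster cuts to a node that is still active at the moment the cluster forms. In particular, every \emph{good} singleton-incident edge (an edge $\{u,v\}$ where $u$ becomes a singleton in iteration $i$ and $v$ was absorbed into some pivot cluster in an iteration $j<i$) is already charged inside $\ourcluster$. Hence the total cost of \ourpivot is at most $\ourcluster + |\badedges|$, with no further triangle-level bookkeeping required.

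The second simplification you are missing is that \Cref{lemma: goodedges} handles your groups~(ii) and~(iii) \emph{together}: $\badedges$ is defined as all singleton-incident edges whose other endpoint was not in a pivot cluster before the singleton was created, which covers both edges to later pivots and edges to other singletons. From $|\badedges|\le 2\eps\,|\singletonedges|$ one gets $|\badedges|\le \frac{2\eps}{1-2\eps}\,|\goodedges|\le 4\eps\cdot \ourcluster$ for $\eps<1/4$, since $|\goodedges|\le \ourcluster$. Combining with \Cref{lemma: pivotcost} (which bounds $\E[\ourcluster]\le 3\,\opt$ directly for \ourpivot's own pivot clusters, not via a comparison to full \pivot) yields cost $\le (1+4\eps)\,\ourcluster$ on the high-probability event, and then $\E[\mathrm{cost}]\le (3+12\eps)\,\opt + n^{-\Omega(1)}$ after absorbing the failure probability with the trivial $n^2$ bound, exactly as you suggest. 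Rescaling $\eps$ finishes. So your step~(ii) is not needed at all; drop the coupling with \pivot-on-$G$ and the attempt to locate unused triangle budget, and the proof becomes a two-line combination of \Cref{lemma: pivotcost} and \Cref{lemma: goodedges}.
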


\paragraph*{A Sequential Process.}
Consider the following (sequential) algorithm and refer to \Cref{alg:delta-pivot-seq} for a pseudocode representation. 
Initially, each node is considered \emph{active}.
For each node $u$, we store the degree $\deg(u)$ of $u$ in the input graph.
We pick a random permutation $\pi$ on the nodes and in each iteration, we pick a node following the permutation.
If this node is still active, it is chosen as a pivot and we create a \emph{pivot cluster} consisting of the pivot node and its active neighbors (Line~\ref{line: seq-pivot} of \Cref{alg:delta-pivot-seq}).
The clustered nodes then become inactive and will not be chosen as pivots later.


\begin{algorithm}[tbh]
  \setcounter{AlgoLine}{0}
  \caption{\seqtruncatedpivot}\label{alg:delta-pivot-seq}
    \KwIn{Graph $G = (V, E^+)$, each node is active in the beginning. Let $\deg(u) = |N(u)|$ be the \emph{initial} degree of node $u$}
     Pick a random permutation $\pi$ over the nodes.\;
    \For{iteration $i = 1, 2, \ldots$  \Comment*[r]{Iterate over $\pi$}}{
     Let $\ell \coloneqq \frac{c}{\eps} \cdot \frac{n\log n}{i}$ \Comment*[r]{$c$ is a well-chosen constant.}
     Let $u \in V$ be the $i^{th}$ node in $\pi$. \;
     Each active node $v$ with $\deg(v) \ge \ell$ becomes inactive and creates a singleton cluster \label[line]{line: exclude} \;
     If $u$ is active, create a pivot cluster $C$ consisting of $u$ and its active neighbors. \label[line]{line: seq-pivot}\;
     Each node in $C$ becomes inactive. \;
     }
\end{algorithm}

Additionally, in iteration $i$, we check whether each active node $v$ has a degree significantly larger than $(n \log n)/i$.
If so, we expect that the previous pivot choices have removed a large fraction of the neighbors of $v$ from the graph.
In this case, $v$ becomes a singleton cluster (Line~\ref{line: exclude} in \Cref{alg:delta-pivot-seq}) and we charge the remaining edges of $v$ to the edges incident on neighbors that joined some pivot clusters in previous iterations.
Notice that the edges of $v$ that got removed before iteration $i$ can be due to a neighbor joining a pivot cluster or due to creating a singleton cluster.
As a technical challenge, we must show that most of the neighbors joined pivot clusters. 
Before the approximation analysis, we show in \Cref{lemma: seqvsparallel} that \Cref{alg:delta-pivot} and \Cref{alg:delta-pivot-seq} produce the same clustering if they sample the same random permutation.

\subsection{Equivalence with \truncatedpivot}

\begin{lemma} \label{lemma: seqvsparallel}
    Fix a (random) permutation $\pi$ over the nodes of $G = (V, E)$.
    Running the \seqtruncatedpivot (\Cref{alg:delta-pivot-seq}) with $\pi$ outputs the same clustering as running the \truncatedpivot (\Cref{alg:delta-pivot}) with $\pi$.
\end{lemma}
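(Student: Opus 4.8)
Once the permutation $\pi$ is fixed, both algorithms are deterministic functions of $\pi$ and the input degrees, so this is a purely deterministic claim. I would prove it by showing, in two stages, that the two algorithms agree (i) on the set of pivots and (ii) on the cluster each non‑pivot node joins. To set up, I first record two elementary facts about \Cref{alg:delta-pivot-seq}. In iteration $i$, Line~\ref{line: exclude} is executed \emph{before} Line~\ref{line: seq-pivot}, and it removes an active node $v$ exactly when $\deg(v)\ge \frac{c}{\eps}\cdot\frac{n\log n}{i}$, i.e.\ when $i\ge \tau_v$; hence a node $v$ that is never absorbed into an earlier pivot cluster is turned into a singleton precisely in iteration $\lceil\tau_v\rceil$, and, since $\pi_v\in\mathbb{Z}$, "$v$ reaches iteration $\pi_v$ unexcluded" is equivalent to $\pi_v<\tau_v$, which is exactly the "interesting" condition of \Cref{alg:delta-pivot} (Line~\ref{line: interesting}). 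In particular, any node selected as a pivot in \Cref{alg:delta-pivot-seq} is interesting, since otherwise Line~\ref{line: exclude} would have removed it at iteration $\pi_v$ before Line~\ref{line: seq-pivot}.

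\textbf{Pivot sets coincide.} I would prove by strong induction on the rank $i$ that the node $u$ with $\pi_u=i$ is a pivot of \Cref{alg:delta-pivot-seq} iff $u\in\mathcal I$. The key characterization is: $u$ is a sequential pivot iff $u$ is interesting and has no neighbor $w$ with $\pi_w<\pi_u$ that is itself a sequential pivot. For "$\Leftarrow$": an interesting $u$ is not removed by Line~\ref{line: exclude} before iteration $\pi_u$, and it can become inactive earlier only by joining the cluster of a lower‑rank pivot neighbor; absent such a neighbor it survives to iteration $\pi_u$ and is selected. For "$\Rightarrow$": if $u$ had a smallest‑rank pivot neighbor $w$ with $\pi_w<\pi_u$, then $u$ (being interesting, hence active and unclustered up to iteration $\pi_w$) would be absorbed into $w$'s cluster at iteration $\pi_w$ and never become a pivot. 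Now the greedy‑MIS recurrence defining $\mathcal I$ on \gstore\ states exactly that $u\in\mathcal I$ iff $u\in\gstore$ (i.e.\ $u$ interesting) and no \gstore‑neighbor of smaller rank lies in $\mathcal I$; since $\mathcal I\subseteq\gstore$, restricting to interesting neighbors is harmless, so by the induction hypothesis the two recurrences agree. Thus the sequential pivots equal $\mathcal I$, and by independence of $\mathcal I$ no two pivots are adjacent.

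\textbf{Cluster assignments coincide.} Fix a non‑pivot $v$ and let $w^\ast$ be its smallest‑rank pivot neighbor, if one exists. In \Cref{alg:delta-pivot} (Line~\ref{line:cluster-pivot-nbrs}), $v$ joins $w^\ast$'s cluster iff $\pi_{w^\ast}<\tau_v$ and otherwise is a singleton. In \Cref{alg:delta-pivot-seq}, $v$ can be absorbed only by a pivot neighbor, and since pivots are pairwise non‑adjacent the first opportunity is iteration $\pi_{w^\ast}$: $v$ joins $w^\ast$'s cluster iff it is still active then, i.e.\ iff it was not excluded by Line~\ref{line: exclude} at an earlier iteration, i.e.\ iff $\lceil\tau_v\rceil>\pi_{w^\ast}$; as $\pi_{w^\ast}\in\mathbb{Z}$ this is equivalent to $\pi_{w^\ast}<\tau_v$, matching \Cref{alg:delta-pivot}. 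If $v$ has no pivot neighbor with $\pi_w<\tau_v$ (in particular if it has none at all), then $v$ is necessarily uninteresting, so $\lceil\tau_v\rceil\le\pi_v$ is a valid iteration at which $v$ is excluded, and $v$ is a singleton in both algorithms. Hence both algorithms put into one cluster exactly $w^\ast$ together with the non‑pivots whose smallest‑rank pivot neighbor is $w^\ast$ with $\pi_{w^\ast}<\tau_v$, and all remaining nodes as singletons — the clusterings are identical.

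\textbf{Main obstacle.} There is no single hard step; the delicate part is bookkeeping in the induction for the pivot sets — namely, pinning down that an interesting node's "active" lifetime in \Cref{alg:delta-pivot-seq} is exactly "until its first lower‑ranked pivot neighbor appears, or until its own turn", and interleaving this correctly with the singleton‑exclusions of Line~\ref{line: exclude}. The only other subtlety, the $\lceil\cdot\rceil$‑versus‑strict‑inequality boundary at $i=\tau_v$, is dispatched by the within‑iteration order of Lines~\ref{line: exclude} and~\ref{line: seq-pivot} together with the integrality of the ranks.
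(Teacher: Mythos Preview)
Your proof is correct. You and the paper establish the same two facts—(i) the pivot sets coincide and (ii) the non-pivot cluster assignments coincide—but in opposite orders. The paper first argues that the singleton-cluster sets are equal and from this deduces that greedy MIS is run on the same residual graph, hence the pivots agree; you instead prove pivot-set equality directly by strong induction on the rank and then derive the cluster assignments. Your ordering is the tighter one: the paper's singleton-set argument uses ``no neighbor $v$ of $u$ with $\pi_v<\tau_u$ becomes a pivot'' in one algorithm to conclude that $u$ is a singleton in the \emph{other} algorithm (and similarly in the reverse direction), which tacitly presupposes the very pivot-set equality that the singleton argument is meant to feed into. Your induction on ranks sidesteps this circularity cleanly, and the characterization ``$u$ is a sequential pivot iff $u$ is interesting and has no lower-rank sequential-pivot neighbor'' is exactly what makes the comparison with the greedy-MIS recurrence on \gstore\ line up.

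Two minor remarks. First, your appeal to non-adjacency of pivots to justify that $w^\ast$ is $v$'s first absorption opportunity is unnecessary; this already follows from $w^\ast$ being the minimum-rank pivot neighbor of $v$. Second, in the last paragraph the sentence ``$v$ is necessarily uninteresting'' deserves the one-line justification you have in mind (an interesting non-pivot lies in \gstore\ and is dominated in the greedy MIS by some $w\in\mathcal I$ with $\pi_w<\pi_v<\tau_v$), since without it a reader might pause.
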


\begin{proof}
   
    Our goal is to show that both algorithms output the same clustering. 
    First, we show that in both cases, the singleton clusters are the same.
    Then, we show that in both cases the greedy MIS runs on the same subgraph, hence outputting the same pivot clusters. 
    
    Consider a node $u$ that is active in the beginning of iteration $i$ $(i \le \pi_u)$, and becomes a singleton cluster due to Line~\ref{line: exclude} of \Cref{alg:delta-pivot-seq}.
    By definition, $i$ is the smallest integer such that $\deg(u) \geq \frac{c}{\eps}\cdot \frac{n}{i}$ and therefore, $i = \lceil \tau_u \rceil$. Since $i \le \pi_u$, we have $\deg(u) \geq \frac{c}{\eps}\cdot \frac{n}{\pi_u}$, which corresponds to $u$ being uninteresting in \Cref{alg:delta-pivot}. 
    Since $u$ is in a singleton cluster, it did not join any pivot cluster, implying that no neighbor of $u$ was picked as a pivot before $u$ became a singleton cluster (i.e. $\forall v\in N(u), \pi_v>i$ or $v$ was clustered before iteration $\pi_v$). 
    Hence, no neighbor $v$ of $u$ s.t. $\pi_v<\lceil \tau_u \rceil$ becomes a pivot. Since $\pi_v$ is an integer, this is equivalent to saying no neighbor $v$ of $u$ s.t. $\pi_v<\tau_u$ becomes a pivot, so by Line~\ref{line:cluster-pivot-nbrs} of \Cref{alg:delta-pivot}, $u$ creates a singleton cluster in \Cref{alg:delta-pivot} as well.

    Now consider a node $u$ that creates a singleton cluster in \Cref{alg:delta-pivot}. 
    Node $u$ must have been labeled uninteresting (implying $\pi_u \geq \tau_u$), and $u$ can neither be a pivot nor have a neighboring pivot $v$ satisfying $\pi_v < \tau_u$. By definition of $\tau_u$, iteration $\lceil\tau_u\rceil$ is the smallest iteration such that $\deg(u) \geq \frac{c}{\eps}\cdot \frac{n}{\lceil\tau_u\rceil}$. 
    This implies that $u$ must be active at the beginning of iteration $\lceil\tau_u\rceil$ in \Cref{alg:delta-pivot-seq}, and forms a singleton cluster in that iteration.
    
    Since the nodes forming singleton clusters in both algorithms are the same, the subgraph induced by nodes not forming singleton clusters $G\left[V\setminus \vsin\right]$ is the same in both cases. 
    Both algorithms find a greedy MIS on $G\left[V\setminus \vsin\right]$, which implies that the pivot nodes will be the same in both cases.
    Finally, we observe that in both algorithms, a non-pivot node $u$ joins the cluster of the first neighbor $v$ s.t. $\pi_v < \tau_u$.
    Hence, the pivot clusters are the same for both \Cref{alg:delta-pivot-seq} and \Cref{alg:delta-pivot}.
\end{proof}

\subsection{Analyzing the Pivot Clusters}\label{sec: pivot clusters}
As the first step of our approximation analysis, we bound the number of disagreements caused by the pivot nodes and their respective clusters.
The analysis is an adaptation of the approach by \cite{ailon2008aggregating}, where we only focus on a subset of the nodes.

Recall the \pivot algorithm~\cite{ailon2008aggregating} that computes a greedy MIS. 
Initially, each node is considered \emph{active}.
The \pivot algorithm picks a random permutation of the nodes and iteratively considers each node in the permutation. 
For each active node $u$ (iterating over the permutation), \pivot forms a cluster with the active neighbors of $u$. 
The cluster is then deleted from the graph by marking the nodes in the new cluster \emph{inactive}. 
This is repeated until the graph is empty, i.e., all the nodes are clustered. 
The \pivot algorithm gives a solution with the expected cost being a $3$-approximation of the optimum solution.

The $3$-approximation given by the \pivot algorithm is due to the nature of the mistakes that can be made through the clustering process.
Consider $u, v, w \in V$: if $e_1 \coloneqq \{u, v\}$ and $e_2 \coloneqq \{v, w\}$ are in $E^+$ but $e_3 \coloneqq \{w, u\}\in E^-$, then clustering those nodes has to produce at least one mistake. 
The triplet $(e_1, e_2, e_3)$ is called a \textit{bad triangle}. 
Because a bad triangle induces at least one mistake in any clustering, even an optimum one, the number of disjoint bad triangles gives a lower bound on the disagreement produced by an optimum clustering. 
In the case of the pivot algorithm, since only direct neighbors of a pivot are added to a cluster, then the following mistakes can happen. 
Either two neighbors are included in the same cluster being dissimilar, which includes a negative edge in the cluster (the pivot was the endpoint of two positive edges in a bad triangle), or the pivot was an endpoint of the negative edge in a bad triangle which implies that only one positive edge of this bad triangle is included in the cluster and the second positive edge is cut. 
The authors of \cite{ailon2008aggregating} show that the expected number of mistakes produced by the \pivot algorithm is the sum of the probability that we make a mistake on every single bad triangle (not necessarily disjoint) in the graph. 
The $3$-approximation is obtained by comparing this expected cost to the cost of a \emph{packing \lp} which is a lower bound on the cost of an optimum clustering.
Our analysis for the mistakes caused by the pivot clusters (\Cref{lemma: expected cost,lemma: pivotcost}) is almost the same as in the previous work~\cite{ailon2008aggregating}. 
Our analysis of the singleton clusters requires us to have an explicit handle on the positive disagreements between the pivot clusters and the singleton clusters, provided by the analysis of the pivot clusters.

\paragraph{The Cost of Pivot Clusters in \seqtruncatedpivot.}
Let us phrase the expected cost of pivot clusters of \seqtruncatedpivot (Line~\ref{line: seq-pivot} of \Cref{alg:delta-pivot-seq}).
Recall that a bad triangle refers to a $3$-cycle with two positive and one negative edge.

\begin{definition} 
    Consider the set of all bad triangles $T$, and let $t \in T$ be a bad triangle on nodes $u, v$ and $w$.
    Define $A_t$ to be the event that, in some iteration, all three nodes are active and one of $\{u, v, w\}$ is chosen as a pivot (Line~\ref{line: seq-pivot} in \Cref{alg:delta-pivot-seq}). Let $p_t = \Pr[A_t]$.
\end{definition}

\begin{definition} \label{def: pivotcost}
    Let $\ourcluster$ be the cost, i.e., the number of disagreements induced by the pivot clusters (Line~\ref{line: seq-pivot} in \Cref{alg:delta-pivot-seq}). 
    For a pivot cluster $C$ created in iteration $i$, the disagreements include (1) the negative edges inside $C$ and (2) the positive edges from nodes in $C$ to nodes that are active in iteration $i$ and not contained in $C$.
    The edges that correspond to positive disagreements caused by the pivot clusters are said to be \emph{cut} by the pivot clusters.
\end{definition}

\begin{lemma}
    Let $T$ be the set of bad triangles in the input graph.
    Then, $\E[\ourcluster] \leq \sum_{t \in T} p_t$.
    \label{lemma: expected cost}
\end{lemma}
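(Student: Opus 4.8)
The plan is to reprove the Ailon--Charikar--Newman charging argument, adapted to the presence of the singleton-exclusion step (Line~\ref{line: exclude}): I will assign to each disagreement counted in $\ourcluster$ a \emph{unique} bad triangle, show that each bad triangle $t$ receives at most one charge and in fact receives exactly one iff the event $A_t$ occurs, and then take expectations. This actually yields the identity $\E[\ourcluster]=\sum_{t\in T}p_t$, which is stronger than the claimed inequality.

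First I would define the charging map. Fix an execution and consider a pivot iteration $i$ in which node $p$ is chosen and forms the pivot cluster $C$ (Line~\ref{line: seq-pivot}). For each negative disagreement $\{a,b\}\in E^-$ with $a,b\in C$, both $a$ and $b$ are active positive neighbours of $p$, so $\{p,a\},\{p,b\}\in E^+$ and $t=(\{p,a\},\{p,b\},\{a,b\})$ is a bad triangle; I charge this disagreement to $t$. For each positive disagreement $\{a,b\}\in E^+$ with $a\in C$, $b$ active in iteration $i$, and $b\notin C$, I observe that $a\neq p$ (otherwise $b$, being an active positive neighbour of $p$, would lie in $C$) and $\{p,b\}\in E^-$ (otherwise $b\in C$), so $t'=(\{p,a\},\{a,b\},\{p,b\})$ is a bad triangle; I charge this disagreement to $t'$. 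I would then check that the map is well defined with each disagreement charged to exactly one triangle: a negative disagreement edge lies inside a unique pivot cluster (once a node is clustered it is inactive), and a positive disagreement edge is cut at most once, because once one of its endpoints is clustered that endpoint is inactive thereafter, so the cutting cluster and its pivot are uniquely determined.

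Next I would bound, for a fixed bad triangle $t=\{u,v,w\}$, the number of disagreements charged to $t$. If some node of $t$ is ever chosen as a pivot while all three of $u,v,w$ are active, this happens in a \emph{unique} iteration, since afterwards that node is inactive and ``all three active'' cannot recur; call it the $A_t$-iteration, with pivot $p\in t$. A short case analysis on the role of $p$ settles the count. If $p$ is the apex (the common endpoint of the two positive edges of $t$), the other two nodes of $t$ join $C$, so the negative edge of $t$ sits inside $C$: exactly one disagreement, charged to $t$. If $p$ is an endpoint of the negative edge of $t$, the apex joins $C$ while the third node does not (it is not a positive neighbour of $p$) and remains active, so exactly one positive edge of $t$ becomes a positive disagreement of $C$ and is charged to $t$, whereas the negative edge of $t$ (with one endpoint outside $C$) is not a disagreement of $C$ by Definition~\ref{def: pivotcost}. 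No charge to $t$ can arise in any other iteration, by uniqueness. Conversely, if $A_t$ never occurs, the definition of the charging map forbids any disagreement from being charged to $t$. Here I would also remark that Line~\ref{line: exclude} only enters through the possibility that a node of $t$ is deactivated early as a singleton, in which case $A_t$ simply fails and the associated positive disagreement, if any, is handled in the singleton analysis rather than here --- consistent with Definition~\ref{def: pivotcost}, which counts a positive disagreement of a cluster only toward nodes still active when the cluster is formed.

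Combining the two bounds, in every execution we get $\ourcluster=\sum_{t\in T}(\#\text{ disagreements charged to }t)=\sum_{t\in T}\mathbbm{1}[A_t]$, and by linearity of expectation $\E[\ourcluster]=\sum_{t\in T}\Pr[A_t]=\sum_{t\in T}p_t$, which proves (a fortiori) the stated inequality. I expect the main obstacle to be the bookkeeping in the previous paragraph: making the case analysis fully airtight about \emph{which} edge of a bad triangle turns into a disagreement depending on which vertex is the pivot, and checking that the ``active at the time of clustering'' qualifier in the definition of a positive disagreement is precisely what prevents a triangle from being over-charged (in particular when one of its vertices left earlier via the exclusion step, or when a node sits outside the cluster of a pivot to which it is only negatively adjacent).
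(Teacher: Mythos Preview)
Your proposal is correct and follows essentially the same Ailon--Charikar--Newman charging argument as the paper: map each pivot-cluster disagreement to the bad triangle formed with the pivot, and observe that a triangle can be hit at most once. Your write-up is in fact more careful than the paper's --- you make the direction of the charging map explicit (disagreement $\to$ triangle rather than triangle $\to$ edge), verify injectivity cleanly via the ``pivot becomes inactive'' observation, and note that the argument actually yields the equality $\E[\ourcluster]=\sum_{t\in T}p_t$, whereas the paper only states the inequality.
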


\begin{proof}
    Consider a bad triangle $t \in T$ and suppose that in some iteration $i$ all nodes in $t$ are active and one of them is chosen as a pivot node (Line~\ref{line: seq-pivot} in \Cref{alg:delta-pivot-seq}), i.e. the event $A_t$ happens at iteration $i$.
    Then, our algorithm creates one disagreement on this triangle on one of its edges $e \in t$.
    We charge this disagreement on edge $e$.
    
    We observe that each triangle $t$ can be charged at most once:
    An edge $e \in t$ is charged only if it is not incident on the pivot node and hence, cannot be charged twice in the same iteration. 
    Hence, at most one edge of $t$ can be charged in one iteration.
    Furthermore, if $e\in t$ gets charged in iteration $i$, its endpoints will not be both active in any later iteration $j > i$.
    This implies that $t$ cannot be charged again in another iteration.
    
    Also, creating clusters with neighbors can only create disagreements on bad triangles.
    Since dropping certain nodes of the graph cannot create bad triangles, the number of disagreements created on a subgraph by this process cannot be higher than the number of disagreements created on the whole graph.
    Therefore, $\E[\ourcluster] \leq \sum_{t \in T} p_t$.
\end{proof}

\paragraph*{Bounding $\opt$.}
In order to give an approximation guarantee to the clustered nodes, we first define the following fractional \lp. 
It was argued by~\cite{ailon2008aggregating} that the cost of the optimal solution $\lp_{OPT}$ to this \lp is a lower bound for the cost $\opt$ of the optimal solution for correlation clustering. Following are the primal and dual forms of this \lp, respectively:
    \begin{align}\label{formula:LP}
        \min \sum_{e\in E^{-}\cup E^{+}}x_e, \quad
        \text{s.t. } \sum_{e\in t}x_e \geq 1, \forall t\in T  \quad \qquad 
        &\max\sum_{t\in T}y_t, \quad
        \text{s.t. } \sum_{t\ni e}y_t \leq 1, \forall e\in E^-\cup E^+,
    \end{align}

where $T$ is the set of all bad triangles (non-necessarily disjoint) of the graph. 
By weak duality we have, $\sum_{t \in T}y_t \leq \lp_{OPT} \leq OPT$ for all dual feasible solutions $\{y_t\}_{t\in T}$. 
Therefore, in order to get an approximation guarantee, it suffices to compare the cost \ourcluster with a carefully constructed dual feasible solution. 
\begin{lemma}
    \label{lemma: pivotcost}
  Let \ourcluster be the number of disagreements incurred by the pivot clusters (\Cref{def: pivotcost}). We have that $\E[\ourcluster] \leq 3 \cdot \opt$.
\end{lemma}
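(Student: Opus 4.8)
The plan is to follow the LP-duality argument of Ailon, Charikar and Newman. By \Cref{lemma: expected cost} we already have $\E[\ourcluster]\le\sum_{t\in T}p_t$, so it suffices to prove $\sum_{t\in T}p_t\le 3\cdot\opt$. I would obtain this by exhibiting a feasible solution $\{y_t\}_{t\in T}$ of the dual LP in \eqref{formula:LP} with $\sum_{t\in T}y_t\ge\tfrac13\sum_{t\in T}p_t$: then weak duality together with $\lp_{OPT}\le\opt$ gives $\opt\ge\lp_{OPT}\ge\sum_t y_t\ge\tfrac13\sum_t p_t\ge\tfrac13\E[\ourcluster]$, i.e.\ $\E[\ourcluster]\le3\opt$. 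The natural candidate is $y_t:=p_t/3$, for which $\sum_t y_t=\tfrac13\sum_t p_t$ automatically; hence the entire content is dual feasibility, namely $\sum_{t\ni e}p_t\le 3$ for every edge $e\in E^-\cup E^+$.

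To prove $\sum_{t\ni e}p_t\le3$ for a fixed edge $e=\{u,v\}$, I would split each event $A_t$ for a bad triangle $t=\{u,v,w\}\ni e$ according to which vertex is the pivot: $A_t=A_t^u\sqcup A_t^v\sqcup A_t^w$, where $A_t^x$ is the event that $x$ is picked as a pivot in Line~\ref{line: seq-pivot} of \Cref{alg:delta-pivot-seq} in an iteration in which $u,v,w$ are all active. These three events are disjoint: in a bad triangle every vertex has a positive edge to at least one of the other two, so once $x$ is chosen as a pivot with all of $t$ active, that neighbour joins $x$'s cluster and becomes inactive, and then no other vertex of $t$ can ever be a pivot with all of $t$ active; thus $p_t=\Pr[A_t^u]+\Pr[A_t^v]+\Pr[A_t^w]$. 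Writing $W_e:=\{w:\{u,v,w\}\in T\}$, it then suffices to bound each of $\sum_{w\in W_e}\Pr[A^w_{\{u,v,w\}}]$, $\sum_{w\in W_e}\Pr[A^u_{\{u,v,w\}}]$ and $\sum_{w\in W_e}\Pr[A^v_{\{u,v,w\}}]$ by $1$. The first is again immediate from disjointness: if $w$ is a pivot with $u,v,w$ active then it has a positive edge to $u$ or to $v$, which is then clustered, so $u,v$ are not both active afterwards; hence the events $A^w_{\{u,v,w\}}$ over $w\in W_e$ are mutually exclusive and their probabilities sum to at most $1$.

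I expect the main obstacle to be the remaining two sums, e.g.\ $\sum_{w\in W_e}\Pr[A^u_{\{u,v,w\}}]\le1$, where the events are \emph{not} disjoint because when $u$ is a pivot it can simultaneously cluster many vertices of $W_e$. The argument here is that $\sum_{w\in W_e}\Pr[A^u_{\{u,v,w\}}]$ equals the expected number of $w\in W_e$ that are still active when $u$ is processed, on the event that $u$ is a pivot and $v$ is still active; one bounds this expectation by $1$ via a counting argument over the random permutation $\pi$, using that $u$ can be a pivot at all only with probability roughly inversely proportional to the number of its positive neighbours that must follow it — exactly as in \cite{ailon2008aggregating}. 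Finally, I would observe that everything above applies verbatim to \ourpivot rather than plain \pivot: the only additional step in \Cref{alg:delta-pivot-seq} is that Line~\ref{line: exclude} turns some nodes into singletons early, which can only make vertices inactive sooner and therefore only helps the disjointness and counting arguments; in particular the decomposition $p_t=\Pr[A_t^u]+\Pr[A_t^v]+\Pr[A_t^w]$ and the inequality $\sum_{t\ni e}p_t\le3$ are unaffected, giving $\E[\ourcluster]\le3\opt$.
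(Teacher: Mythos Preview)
Your overall plan is right --- set $y_t=p_t/3$, check dual feasibility, invoke weak duality and \Cref{lemma: expected cost} --- and this is exactly what the paper does. Your disjointness argument for $\sum_{w\in W_e}\Pr[A^w_{\{u,v,w\}}]\le 1$ is also correct and is in fact the \emph{only} inequality needed.

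The gap is in how you handle the other two sums $\sum_{w\in W_e}\Pr[A^u_{\{u,v,w\}}]$ and $\sum_{w\in W_e}\Pr[A^v_{\{u,v,w\}}]$. These events are not disjoint, and you appeal to a ``counting argument over the random permutation \dots\ exactly as in \cite{ailon2008aggregating}'' to bound each sum by $1$. But that is not what Ailon--Charikar--Newman do, and you have not supplied such an argument; it is not clear how to make the sketch (``$u$ can be a pivot only with probability roughly inversely proportional to the number of its positive neighbours that must follow it'') into a proof of the precise inequality $\sum_{w\in W_e}\Pr[A^u_{\{u,v,w\}}]\le 1$. The actual ACN step --- and the one this paper uses --- is a symmetry observation you have bypassed: conditioned on $A_t$, each of the three vertices of $t$ is equally likely to be the pivot, so $\Pr[A_t^u]=\Pr[A_t^v]=\Pr[A_t^w]=p_t/3$. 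Then you do not need to bound the $A_t^u$ and $A_t^v$ sums at all; your own disjointness bound for the ``opposite vertex'' sum already gives $\sum_{t\ni e}p_t/3=\sum_{w\in W_e}\Pr[A_t^w]\le 1$, which is dual feasibility. In the paper's language this is phrased via $D_e\wedge A_t$ (the event that the disagreement on $t$ lands on $e$), which is precisely your $A_t^w$ for $w$ the vertex of $t$ opposite $e$; the symmetry gives $\Pr[D_e\wedge A_t]=p_t/3$, and disjointness of $\{D_e\wedge A_t\}_{t\ni e}$ gives feasibility.

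One more point: your closing remark that Line~\ref{line: exclude} ``only makes vertices inactive sooner and therefore only helps'' is not quite the right justification. What matters is that the symmetry survives the singleton rule: fixing the positions of $V\setminus\{u,v,w\}$ and the \emph{set} of positions occupied by $\{u,v,w\}$, the execution up to (and including Line~\ref{line: exclude} of) the first of those positions is identical under all $3!$ assignments, because the singleton rule at each iteration depends only on degrees and on which \emph{other} nodes have been processed, not on which of $u,v,w$ sits where. Hence, conditioned on $A_t$, the pivot is uniform on $\{u,v,w\}$, and the paper's one-line feasibility argument goes through.
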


\begin{proof}
Let $T$ be the set of bad triangles.
Recall the event $A_t$ that all nodes in $t \in T$ are active and one of the nodes in $t$ is chosen as a pivot (Line~\ref{line: seq-pivot} of \Cref{alg:delta-pivot-seq}) and let $\Pr[A_t] = p_t$.
Our goal is to use the probabilities $p_t$ to find a feasible solution to the packing \lp defined above.

Let $D_e$ be the event that \Cref{alg:delta-pivot-seq} creates a disagreement on $e$ and notice that $D_e \land A_t$ denotes the event that the disagreement caused by $A_t$ was charged on $e$.
By the definition of $A_t$, this disagreement cannot be due to creating singleton clusters in Line~\ref{line: exclude} of \Cref{alg:delta-pivot-seq}.
Consider now the event $A_t$ and observe that, as we are iterating over a random permutation of the nodes, each node in $t$ has the same probability to be chosen as the pivot (recall that the nodes of $t$ are all active by definition of $A_t$).
Furthermore, exactly one choice of pivot can cause $D_e$ for each $e \in t$.
Hence, we have that $\Pr[D_e \mid A_t] = 1/3$ and therefore, $\Pr[D_e \land A_t] = \Pr[D_e \mid A_t] \cdot \Pr[A_t] = p_t/3$.

  Consider the assignment $y_{t} = p_{t}/3$. 
  We now show that this is a feasible solution for the dual \lp in equation~(\ref{formula:LP}). This is because for all edges $e \in E^{+} \cup E^{-}$ the events $\{D_{e} \land A_{t}\}_{t \ni e}$ are disjoint from each other, and hence we have 
  \[
    \sum_{t \ni e} y_{t} = \sum_{t \ni e} \frac{p_{t}}{3} = \sum_{t \ni e}\Pr[D_{e} \land A_{t}] = \Pr[\cup_{t \ni e}D_{e} \land A_{t}] \le 1 \ .
  \]
  As this is a feasible packing, we have that $\sum_{t \ni e} p_{t}/3 \leq \opt$.
  Finally, by \Cref{lemma: expected cost} 
  \[
    \E[\ourcluster] \leq \sum_{t} p_t = 3 \cdot \sum_{t \in T} \frac{p_t}{3} \leq 3 \cdot \opt \ . \qedhere
  \]
\end{proof}

\subsection{Analyzing the Singleton Clusters}\label{sec: singleton}

The goal of this section is to bound the number of disagreements caused by the singleton clusters created in Line~\ref{line: exclude} of \Cref{alg:delta-pivot-seq}.
The high-level idea is to show that for a node $u$ of degree $\deg(u)$, either $u$ is clustered by some pivot node after $O(n / \deg(u))$ iterations or most of its edges have been cut by pivot clusters.
In the latter case, we relate the cost of the remaining edges of $u$ to the ones cut by the pivot clusters, and show that the remaining edges do not incur a large additional cost.
We also need to account for singleton clusters where most of  the edges are incident on other singleton clusters.
For this, we will do a counting argument that shows that there cannot be many singleton clusters that have many edges to other singleton clusters.

\paragraph{Charging the Edges Incident on the Singleton Clusters.}
Now, our goal is to bound the number of edges cut by the singleton clusters created in Line~\ref{line: exclude} of \Cref{alg:delta-pivot-seq}.
For intuition, consider a node $u$ and its neighbors with a \emph{smaller} degree, and suppose that $u$ will not be included in a pivot cluster.
Furthermore, suppose that roughly half of its neighbors have a smaller degree.
If any smaller degree neighbor $v$ is chosen according to the random permutation in the first (roughly) $n / \deg(u)$ iterations, then $v$ will be chosen as a pivot.
As we will show, this implies that, in expectation, almost all (roughly a $(1-\eps)$-fraction) of the smaller degree neighbors either join a pivot cluster or at least one of them will be chosen as a pivot which would include $u$ in a pivot cluster (\Cref{lemma: low-deg-bound}).
Once we have this, we can spread the disagreements on the remaining $\eps$-fraction of the edges to smaller degree nodes to the edges cut by pivot clusters.
As a technical challenge, we also need to account for nodes who have a few smaller degree neighbors to begin with.
We use a counting argument (\Cref{lemma: goodedges}) to show that a large fraction of nodes must have many neighbors in pivot clusters, which allows us to also spread the cost of the nodes with few smaller degree neighbors.

Consider a node $u$ and let $N_i(u)$ be the set of nodes at the beginning of iteration $i$ such that for each $v \in N_i(u)$, we have that $\deg(v) \leq \deg(u)$ and $v$ is not in a pivot cluster. Let $\deg_i(u) = |N_i(u)|$.
    
 \begin{lemma} \label{lemma: low-deg-bound}
     For each node $u$, at the beginning of iteration $i = \lceil\tau_u\rceil$ (recall, $\tau_u = \frac{c}{\eps} \cdot \frac{n \log n}{\deg(u)}$ from \Cref{alg:delta-pivot}), the probability that $u$ is active and $\deg_{i}(u) > \eps \cdot \deg(u)$ is upper bounded by $1/n^{c/2}$.
 \end{lemma}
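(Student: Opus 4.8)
The plan is to exploit that, as long as $u$ stays active, every iteration before iteration $i=\lceil\tau_u\rceil$ has probability $\Omega(\deg(u)/n)$ of clustering $u$; since $\tau_u=\frac{c}{\eps}\cdot\frac{n\log n}{\deg(u)}$, surviving for $i-1\approx\tau_u$ iterations while still having $\deg_i(u)>\eps\deg(u)$ should cost a factor $\exp(-\Omega(c\log n))$.

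First I would record two structural facts about iterations $1,\dots,i-1$. Since $i-1=\lceil\tau_u\rceil-1<\tau_u$, for each such iteration $j$ the singleton threshold $\ell=\frac{c}{\eps}\cdot\frac{n\log n}{j}$ of Line~\ref{line: exclude} of \Cref{alg:delta-pivot-seq} strictly exceeds $\frac{c}{\eps}\cdot\frac{n\log n}{\tau_u}=\deg(u)$. Hence neither $u$ nor any neighbor $v$ of $u$ with $\deg(v)\le\deg(u)$ can be removed as a singleton during iterations $1,\dots,i-1$; consequently, for $j\le i$, the set $N_j(u)$ is exactly the set of still-active neighbors of $u$ of degree at most $\deg(u)$ at the start of iteration $j$, and $\deg_j(u)$ is non-increasing in $j$. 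The second fact is a ``trapping'' observation: if $u$ is active at the start of an iteration $j\le i-1$ and the node $\pi^{-1}(j)$ at position $j$ lies in $N_j(u)$, then $\pi^{-1}(j)$ has degree at most $\deg(u)<\ell$, so it survives the exclusion step of iteration $j$ and is made a pivot; and $u$, also surviving that step because $\deg(u)<\ell$, is an active neighbor of it and therefore joins its cluster (Line~\ref{line: seq-pivot}). So on that event $u$ is no longer active at the start of iteration $j+1$.

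Then I would run an exposure argument over the random permutation. Every node of $N_j(u)$ is active at the start of iteration $j$ and hence has not yet appeared in $\pi$, so $N_j(u)$ sits among the $n-j+1$ nodes in positions $\ge j$; conditioned on the first $j-1$ positions, $\pi^{-1}(j)$ is uniform over those, so it lies in $N_j(u)$ with conditional probability $\deg_j(u)/(n-j+1)\ge\deg_j(u)/n$. For $1\le j\le i$ let $G_j$ be the event that $u$ is active at the start of iteration $j$ and $\deg_j(u)>\eps\deg(u)$; the event of the lemma is exactly $G_i$. Since both defining conditions are monotone in $j$ — the second by the non-increasingness above — the events are nested, $G_i\subseteq G_{i-1}\subseteq\cdots\subseteq G_1$. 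Combining the trapping observation with the probability bound, conditioned on $G_j$ (and on any history consistent with it) the node $\pi^{-1}(j)$ lands in $N_j(u)$ — forcing $u$ inactive at the start of iteration $j+1$ — with probability at least $\deg_j(u)/n>\eps\deg(u)/n$; hence $\Pr[G_{j+1}\mid G_j]\le 1-\eps\deg(u)/n$. Telescoping over $j=1,\dots,i-1$,
\[
\Pr[G_i]\ \le\ \prod_{j=1}^{i-1}\Big(1-\tfrac{\eps\deg(u)}{n}\Big)\ \le\ \exp\!\Big(-\tfrac{(i-1)\,\eps\deg(u)}{n}\Big).
\]
Because $i-1\ge\tau_u-1$ and $\tau_u\cdot\eps\deg(u)/n=c\log n$ while $\eps\deg(u)/n<1$, the exponent is at least $c\log n-1\ge\tfrac{c}{2}\log n$ (for a suitably large constant $c$), so $\Pr[G_i]\le n^{-c/2}$, as claimed. (If $\lceil\tau_u\rceil$ exceeds the number of nodes, $u$ is already clustered at that point and the event is empty.)

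I expect the crux to be bookkeeping the joint event ``$u$ active \emph{and} $\deg_i(u)$ large'' correctly: rather than conditioning on one part and bounding the other, one must use that $\deg_j(u)$ is monotone so that the $G_j$ form a decreasing chain on which the per-iteration clustering probability stays at least $\eps\deg(u)/n$, which is what legitimizes the telescoping product. A secondary but essential check is that a small-degree neighbor really becomes a \emph{pivot} (rather than being removed as a singleton) throughout iterations $1,\dots,i-1$ — this is precisely what the extra $\log n/\eps$ slack in the definition of $\tau_u$ buys.
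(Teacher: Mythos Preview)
Your proof is correct and follows essentially the same approach as the paper: define the nested events $G_j$ (the paper calls them $B_j$), bound $\Pr[G_{j+1}\mid G_j]\le 1-\eps\deg(u)/n$ via the probability that the $j$-th permutation slot lands in $N_j(u)$, and telescope. You are in fact more careful than the paper on two points it leaves implicit: that the threshold $\ell$ in iterations $j<\lceil\tau_u\rceil$ strictly exceeds $\deg(u)$, so a small-degree neighbor at position $j$ genuinely becomes a \emph{pivot} (and $u$ itself is not excluded before joining that pivot's cluster), and that $\deg_j(u)$ is non-increasing so the events $G_j$ really form a decreasing chain.
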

 \begin{proof}
We define $A_k$ the events that $u$ is active at the beginning of iteration $k$, and the events $B_k \coloneqq \{\{\deg_k(u)>\eps \deg(u)\} \, \cap \, A_k\}$, $\forall\ 1 \le k \le i$. 
We want to show that $\Pr[B_i]\leq 1/n^{c/2}$. A useful property of these events is that $B_k \subseteq B_{k-1}$, $\forall\ 1 < k \le i$.

Using conditional probabilities we get that,
\begin{align*}
    \Pr[B_i] &= \Pr[B_i \cap B_{i-1}] = \Pr[B_i\,|\, B_{i-1}]\cdot \Pr[B_{i-1}] = \left(\prod_{k=2}^i\Pr[B_k\,|\,B_{k-1}] \right)\cdot \Pr[B_1].
\end{align*}

In the following, we use the fact that if two events $\mathcal{E}_1$ and $\mathcal{E}_2$ are such that $\mathcal{E}_1\subseteq \mathcal{E}_2$, then $\Pr[\mathcal{E}_1]\leq \Pr[\mathcal{E}_2]$. 
We also use the fact that, conditioning on $B_{k-1}$ implies that at the beginning of iteration $k-1$, there are at least $\eps \cdot \deg(u)$ nodes in $N_{k-1}(u)$. 
\begin{align*}
    \Pr[B_k^c \ |\  B_{k-1}] &= \Pr\Big[\left\{\deg_k(u) \leq \eps\cdot \deg(u)\right\}\, \cup \, A_k^c \ | \ B_{k-1}\Big]\\
    & \geq \Pr\Big[\text{$u$ becomes inactive during iteration $k-1$} \  | \  B_{k-1}\Big]\\
    & \geq \Pr\Big[\text{a node in $N_{k-1}(u)$ becomes a pivot during iteration $k-1$} \  | \  B_{k-1}\Big]\\
    &\geq \frac{\eps\cdot \deg(u)}{n-k+1}\geq \frac{\eps\cdot \deg(u)}{n}.
\end{align*}

Hence, $\Pr[B_k|B_{k-1}]\leq 1-\eps\deg(u)/n$. 
We finally get that 
\begin{align*}
    \Pr[B_i] &\leq \left(1-\frac{\eps\deg(u)}{n}\right)^{i-1} \leq \left(1-\frac{\eps\deg(u)}{n}\right)^{i/2} 
    \leq \exp\left(-\frac{\eps\deg(u)}{2n}\cdot {\frac{c}{\eps}\cdot \frac{n\log n}{\deg(u)}} \right)\leq \frac{1}{n^{c/2}} . \qedhere
\end{align*}%
\end{proof}

\begin{lemma} \label{lemma: singleton-cluster-deg-bound}
    In all iterations $i$, all nodes $u$ that are put into singleton clusters in iteration $i$ (Line~\ref{line: exclude} of \Cref{alg:delta-pivot-seq}) satisfy $\deg_{i}(u) \le \eps \cdot \deg(u)$ with probability $1-1/n^{\alpha}$ where $\alpha \coloneqq c/2 - 1 \gg 2$.
\end{lemma}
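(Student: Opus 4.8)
The plan is to reduce this statement to \Cref{lemma: low-deg-bound} by a union bound over the $n$ nodes. First I would pin down precisely when Line~\ref{line: exclude} of \Cref{alg:delta-pivot-seq} can place a node $u$ into a singleton cluster. In iteration $i$, a still-active node $v$ is excluded iff $\deg(v)\ge \ell = \frac{c}{\eps}\cdot\frac{n\log n}{i}$, i.e. iff $i \ge \frac{c}{\eps}\cdot\frac{n\log n}{\deg(v)} = \tau_v$. Since exclusion removes $v$ from the set of active nodes, $v$ can be excluded in at most one iteration, and that iteration is the smallest integer $i$ with $i\ge\tau_v$, namely $i=\lceil\tau_v\rceil$. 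Hence for a fixed node $u$, the event ``$u$ is placed in a singleton cluster by Line~\ref{line: exclude} in some iteration $i$ and $\deg_i(u) > \eps\cdot\deg(u)$'' is contained in the event ``$u$ is active at the beginning of iteration $\lceil\tau_u\rceil$ and $\deg_{\lceil\tau_u\rceil}(u) > \eps\cdot\deg(u)$''.

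Next I would invoke \Cref{lemma: low-deg-bound} verbatim: the latter event has probability at most $1/n^{c/2}$. Taking a union bound over all $n$ choices of $u$, the probability that \emph{some} node $u$ is excluded by Line~\ref{line: exclude} in an iteration $i$ with $\deg_i(u)>\eps\cdot\deg(u)$ is at most $n\cdot n^{-c/2} = n^{-(c/2-1)} = n^{-\alpha}$. Equivalently, with probability at least $1-1/n^{\alpha}$, every node $u$ that is put into a singleton cluster in Line~\ref{line: exclude} in some iteration $i$ satisfies $\deg_i(u)\le\eps\cdot\deg(u)$, which is exactly the claim (here $c$ is chosen as a sufficiently large constant so that $\alpha = c/2-1 \gg 2$, matching its use in the sequel).

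I do not expect any genuine obstacle here; the only points that need care are (i) the clean identification that Line~\ref{line: exclude} fires for $u$ only at iteration $\lceil\tau_u\rceil$, using that an excluded node immediately becomes inactive, and (ii) ensuring the quantifier structure is right so that the single-node bound of \Cref{lemma: low-deg-bound} is applied at the correct iteration $i=\lceil\tau_u\rceil$ before the union bound. Everything else is a one-line probability-amplification argument.
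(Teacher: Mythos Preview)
Your proposal is correct and follows essentially the same approach as the paper: identify that Line~\ref{line: exclude} can fire for a node $u$ only in iteration $\lceil\tau_u\rceil$, apply \Cref{lemma: low-deg-bound} at that iteration, and union bound over the $n$ nodes to get failure probability $n\cdot n^{-c/2}=n^{-\alpha}$. The paper's proof is the same argument, just stated more tersely.
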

\begin{proof}
By \Cref{lemma: low-deg-bound} and union bound over all nodes, we can say that with probability at most $1/n^\alpha$, there exists a node $u$ such that at the beginning of iteration $i = \lceil\tau_u\rceil$, $u$ is active and $\deg_{i}(u) > \eps \cdot \deg(u)$. Therefore, with high probability, for all nodes $u$, at the beginning of iteration $i = \lceil\tau_u\rceil$, either $u$ is already inactive or $\deg_{i}(u) \le \eps \cdot \deg(u)$. This implies that, with high probability, if $u$ is put in a singleton cluster (Line~\ref{line: exclude} of \Cref{alg:delta-pivot-seq}), which can happen only in iteration $i = \lceil\tau_u\rceil$, we have $\deg_{i}(u) \le \eps \cdot \deg(u)$.
\end{proof}

\paragraph*{Good Edges and Counting.}
Consider a positive edge incident on a singleton cluster that contains a node $u$.
Suppose that the singleton cluster was created in iteration $i$.
We define an edge $e = \{u, v \}$ to be \emph{good} if the other endpoint, node $v$, was included in a pivot cluster (Line~\ref{line: seq-pivot} of \Cref{alg:delta-pivot-seq}) in some iteration $j < i$.
Otherwise, edge $e$ is \emph{bad}. 
The sets $\goodedges$ and $\badedges$ give a partition of $\singletonedges$, the set of edges incident to singleton clusters. 
Intuitively, if an edge is good, we can charge it to the set $\ourcluster$ which we know how to bound through \Cref{lemma: pivotcost}.
Furthermore, if we can show that most edges incident on singleton clusters are good, we can bound the cost of the bad edges.

 \begin{lemma}\label{lemma: goodedges}
     Conditioned on the high probability event of \Cref{lemma: singleton-cluster-deg-bound}$, |\badedges| \le 2\eps \cdot |\singletonedges|$.
 \end{lemma}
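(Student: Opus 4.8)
I would prove the bound by charging each edge of $\badedges$ to one of its endpoints that lies in a singleton cluster. For a singleton node $w$, write $i_w = \lceil \tau_w \rceil$ for the iteration in which Line~\ref{line: exclude} of \Cref{alg:delta-pivot-seq} creates its singleton cluster. The goal is to show that each such $w$ is charged by at most $\deg_{i_w}(w) = |N_{i_w}(w)|$ edges of $\badedges$; since we condition on the event of \Cref{lemma: singleton-cluster-deg-bound}, this is at most $\eps \cdot \deg(w)$. Summing over all singletons and noting that every edge of $\singletonedges$ has at most two endpoints that are singletons, so that $\sum_{w \text{ singleton}} \deg(w) \le 2|\singletonedges|$, yields $|\badedges| \le \eps \sum_{w \text{ singleton}} \deg(w) \le 2\eps|\singletonedges|$.

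\textbf{Charging rule.} Let $e = \{u,v\} \in \badedges \subseteq \singletonedges$, so at least one endpoint of $e$ is a singleton. Among the singleton endpoints of $e$, let $w$ be one of maximum initial degree (breaking ties by identifier) and let $w'$ be the other endpoint; charge $e$ to $w$. The crucial claim is that $w' \in N_{i_w}(w)$, i.e.\ that $\deg(w') \le \deg(w)$ and $w'$ is not in a pivot cluster at the start of iteration $i_w$. The second condition is immediate from $e$ being \emph{bad} with respect to $w$: by definition $w'$ joins no pivot cluster in any iteration $j < i_w$, hence at the beginning of iteration $i_w$ it is either still active or already a singleton, in either case not in a pivot cluster.

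\textbf{The degree comparison.} It remains to verify $\deg(w') \le \deg(w)$. If $w'$ is itself a singleton this holds by the choice of $w$. If $w'$ is not a singleton, suppose toward a contradiction that $\deg(w') > \deg(w)$. Then $\tau_{w'} < \tau_w$, so $\lceil\tau_{w'}\rceil \le \lceil\tau_w\rceil = i_w$. As above, $w'$ joins no pivot cluster before iteration $i_w$, hence none before iteration $\lceil\tau_{w'}\rceil$; and since the exclusion thresholds $\ell = \frac{c}{\eps}\cdot\frac{n\log n}{i}$ only decrease over iterations, $w'$ cannot have become a singleton before iteration $\lceil\tau_{w'}\rceil$ either. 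Therefore $w'$ is still active at the beginning of iteration $\lceil\tau_{w'}\rceil$, where $\deg(w') \ge \frac{c}{\eps}\cdot\frac{n\log n}{\lceil\tau_{w'}\rceil}$ forces Line~\ref{line: exclude} to turn $w'$ into a singleton in that very iteration (Line~\ref{line: exclude} runs before Line~\ref{line: seq-pivot}, so $w'$ cannot be rescued into a pivot cluster). This contradicts the assumption that $w'$ is not a singleton, so $\deg(w') \le \deg(w)$ holds in all cases and the claim follows.

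\textbf{Conclusion and main obstacle.} Given the claim, every edge of $\badedges$ charged to a fixed singleton $w$ has the form $\{w,w'\}$ with a distinct $w' \in N_{i_w}(w)$, so $w$ receives at most $|N_{i_w}(w)| = \deg_{i_w}(w) \le \eps\deg(w)$ units of charge by \Cref{lemma: singleton-cluster-deg-bound}; summing as in the first paragraph completes the proof. The routine parts are the summation and the double-counting factor of $2$; the real content is the degree-comparison step, i.e.\ showing that when a bad edge runs from a singleton to a strictly higher-degree neighbor, that neighbor must itself already be a singleton, created no later. This is precisely where the design of \Cref{alg:delta-pivot-seq} is used — degree-based exclusion with monotone thresholds applied before the pivot step of each iteration — and it is the one place where one has to track the ceilings $\lceil\tau_\cdot\rceil$ carefully.
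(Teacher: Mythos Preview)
Your proof is correct and is essentially the paper's argument rephrased as a charging scheme rather than an edge orientation: both charge each bad edge to its higher-degree singleton endpoint $w$, observe that the other endpoint lies in $N_{i_w}(w)$, and then bound via $\deg_{i_w}(w)\le\eps\deg(w)$ together with $\sum_{w\text{ singleton}}\deg(w)\le 2|\singletonedges|$. Your degree-comparison step (showing a bad edge from a singleton to a strictly higher-degree neighbor forces that neighbor to be a singleton too) is in fact spelled out more carefully than in the paper, which simply asserts that the orientation accounts for all of $\badedges$.
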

 \begin{proof}
        For every node $u$, we define $\deg_i(u) = |N_i(u)|$ where $N_i(u)$ is the set of neighbors of $u$ such that for $v \in N_i(u)$, $\deg(v) \leq \deg(u)$ and $v$ is not in a pivot cluster.
 
         For the analysis, let us consider the following orientation on the bad edges.
         Consider an iteration $i$, where a node $u$ is put into a singleton cluster in Line~\ref{line: exclude} of \Cref{alg:delta-pivot-seq}.
         Notice that this implies that $i = \lceil\tau_u\rceil$.
         Then, we orient each unoriented edge from $u$ to $v$ for each neighbor $v$ such that $\deg(v) \leq \deg(u)$ and $v$ is not in a pivot cluster, i.e. we orient all edges between $u$ and $N_i(u)$ from $u$ to $N_i(u)$. 
         Denote the out-degree of a node $u$ by $\outdeg(u)$, and notice that $\outdeg(u) = \deg_i(u)$. 
         Notice that $\outdeg(u)$ is a random variable. 
         
         Our conditioning on the high probability event of \Cref{lemma: singleton-cluster-deg-bound} gives $\deg_{i}(u) \le \eps \cdot \deg(u)$. 
        
        Hence, the out-degree of each singleton node $u$ verifies $\outdeg(u) \le \eps \cdot \deg(u)$. 
        Also, by definition, the out-degree of each non-singleton node is $0$.

        Let $\vsin$ be the set of nodes that are put in singleton clusters in Line~\ref{line: exclude} of \Cref{alg:delta-pivot-seq}, and let $\mathbbm{1}_{u\in \vsin}$ be the corresponding indicator random variable. 
        Notice that $|\vsin| = \sum_{u\in V}\mathbbm{1}_{u\in \vsin}$, $|\badedges|$ and $|\singletonedges|$ are random variables. 
        By definition of the orientation,
        \[
            |\badedges| = \sum_{u\in V} \outdeg(u) = \sum_{u\in V} \mathbbm{1}_{u\in \vsin}\cdot \outdeg(u),  
        \]
        since $\mathbbm{1}_{u\in \vsin} = 0$ implies $\outdeg(u) = 0$. 
        By using \Cref{lemma: low-deg-bound}, we have that 
     \[
         |\badedges| = \sum_{u\in V}\mathbbm{1}_{u\in \vsin} \cdot \outdeg(u) \leq \sum_{u\in V} \mathbbm{1}_{u\in \vsin}\cdot \eps \cdot \deg(u) .
     \]       
     By using the handshake lemma, we have that
     \[
        \sum_{u\in V} \mathbbm{1}_{u\in \vsin}\cdot \eps \cdot \deg(u) \leq 2\eps \cdot |\singletonedges| \ . \qedhere
     \]
        
 \end{proof}

We now have in hand all the necessary results to be able to prove our main theorem, which was the following. 

\maintheorem*
\begin{proof}
Recall the following definitions.
\begin{itemize}
    \item We denote the cost of the pivot clusters by \ourcluster (see \Cref{def: pivotcost}). This cost also covers the cost of the positive edges between pivot clusters and singleton clusters that were cut by the pivot clusters. These edges are called \emph{good}, and the set of those edges is denoted by \goodedges.
    \item Bad edges are the positive edges incident on singleton clusters that were not cut by the pivot cluster. Either they are between singletons or the singleton was created before the pivot cluster. Denote those edges by \badedges.
    \item $\singletonedges = \goodedges \cup \badedges$
\end{itemize}

We can split the cost of \Cref{alg:delta-pivot-seq} into two parts.
By \Cref{lemma: pivotcost}, we have that $\E[\ourcluster] \leq 3 \cdot \opt$. 
Let us define $D$ to be the event that, for all iterations $i$, all nodes $u$ that are put in singleton clusters in iteration $i$ satisfy $\deg_i(u) < \eps\cdot \deg(u)$. 
By \Cref{lemma: singleton-cluster-deg-bound}, $D$ is a high probability event. 
Then, by \Cref{lemma: goodedges}, we have that, conditioning on the high probability event $D$,
\[
    |\badedges| \leq 2\eps \cdot |\singletonedges| \leq \frac{2\eps}{1-2\eps}\cdot |\goodedges| \leq 4\eps \cdot \ourcluster \ ,
\]
where the last inequality holds because $\eps < 1/4$. 
This inequality implies that $\mathbb{E}[|\badedges|\mid D] \leq 4\eps\cdot \mathbb{E}[\ourcluster\mid D]$. And therefore,
\begin{align*}
\mathbb{E}[|\badedges|] &= \mathbb{E}[|\badedges|\mid D]\Pr[D] + \mathbb{E}[|\badedges|\mid \Bar{D}]\cdot \Pr[\Bar{D}] \\
&\leq 4\eps\cdot \mathbb{E}[\ourcluster\mid D]\cdot \left(1-\frac{1}{n^c}\right) + n^2\cdot \frac{1}{n^\alpha} \\
&\leq 4\eps\cdot \mathbb{E}[\ourcluster\mid D] + \frac{1}{n^{\alpha-2}}  \ .
\end{align*}

Also, notice that,
 \begin{align*}
     \mathbb{E}\left[\ourcluster\right] &=  \mathbb{E}\left[\ourcluster\mid D \right]\cdot \Pr[D] + \mathbb{E}\left[\ourcluster\mid \Bar{D}\right]\cdot \Pr[\Bar{D}]\\
     &\geq \mathbb{E}\left[\ourcluster  \mid D \right]\cdot \Pr[D] \\
     &\geq \mathbb{E}\left[\ourcluster\mid D \right]\cdot \left( 1-\frac{1}{n^\alpha}\right)\\
     &\geq \mathbb{E}\left[\ourcluster\mid D \right] - \frac{1}{n^{\alpha-2}} \text{ , since }\mathbb{E}\left[\ourcluster\mid D \right] \leq n^2.
 \end{align*}
 
 Which implies that 
 \begin{align*}
     \mathbb{E}\left[\ourcluster\mid D \right] \leq \mathbb{E}\left[\ourcluster\right] + \frac{1}{n^{\alpha-2}}\ .
 \end{align*}

By combining the above observations, we have that the expected cost of \Cref{alg:delta-pivot-seq} is at most
\begin{align*}
    \mathbb{E}\left[\ourcluster  + |\badedges|\right] &= \mathbb{E}\left[\ourcluster\right]  + \mathbb{E}\left[|\badedges|\right]\\
    &\leq \mathbb{E}\left[\ourcluster\right] + 4\eps\cdot \mathbb{E}[\ourcluster\mid D] + \frac{1}{n^{\alpha-2}}\\
    &\leq (1 + 4\eps)\cdot \mathbb{E}\left[\ourcluster\right] + \frac{1+4\eps}{n^{\alpha-2}}\\
    &\leq (3+12\eps) \cdot \opt + \frac{1+4\eps}{n^{\alpha-2}}.
\end{align*}

We can substitute $\eps' \coloneqq 12\eps$, where $\eps$ can be arbitrarily small. 
Notice that if $\opt \geq 1$, then we have that $\mathbb{E}\left[\ourcluster  + |\badedges|\right]\leq (3+12\eps) \cdot \opt$, which gives us a $(3+\eps')$-approximation in expectation. \qedhere

\end{proof}

\begin{remark}
    If $\opt = 0$, then the expected cost of our solution is $1/\poly(n)$ according to the proof above, or equivalently, the expected cost of our solution is $0$ with high probability.
\end{remark}

\begin{proof}[Proof of Theorem \ref{cor: parallelApproximation}]
    \Cref{cor: parallelApproximation} follows from \Cref{thm: main} and \Cref{lemma: seqvsparallel}.
\end{proof}

\section*{Acknowledgements}
We would like to thank Moses Charikar, Soheil Behnezhad, Weiyun Ma, and Li-Yang Tan for pointing out an error in an earlier analysis of our correlation clustering algorithm.
We would also like to thank Vihan Shah and Sepehr Assadi for pointing out that our algorithm works even in dynamic streaming. Finally, we thank Dennis Olivetti and Alkida Balliu for fruitful discussions.

M\'elanie Cambus is supported by Research Council of Finland Grant 334238. Part of this work was done when Shreyas Pai was a postdoctoral fellow at Aalto University, supported by Research Council of Finland Grant 334238 and Helsinki Institute for Information Technology HIIT.


\printbibliography
 


\end{document}